\documentclass[11pt,a4paper]{article}

\usepackage{cite}
\usepackage{tikz}\usetikzlibrary{matrix,decorations.pathreplacing,positioning}
\usepackage{hyperref}
\usepackage{amsmath,amsthm,amssymb,bm,relsize,comment}

\usepackage{setspace}
\usepackage{dsfont}
\usepackage{multirow}
\usepackage{array}
\usepackage{enumerate}
\newcolumntype{x}[1]{>{\centering\arraybackslash\hspace{0pt}}p{#1}}

\usepackage{wasysym}

\usepackage{caption}
\usepackage{subcaption}

\usepackage{enumitem}
\setitemize{itemsep=-1pt}
\setenumerate{itemsep=-1pt}

\usepackage{titling}
\usepackage[margin=2.6cm]{geometry}
\usepackage{pgfplots}
\pgfplotsset{width=10cm,compat=1.9}

\definecolor{myg}{RGB}{220,220,220}

\theoremstyle{definition}
\newtheorem{theorem}{Theorem}[section]

\newtheorem{proposition}[theorem]{Proposition}
\newtheorem{lemma}[theorem]{Lemma}

\newtheorem{definition}[theorem]{Definition}
\newtheorem{example}[theorem]{Example}

\usepackage{etoolbox}
\AtEndEnvironment{example}{\null\hfill$\blacktriangle$}%

\newtheorem{notation}[theorem]{Notation}
\newtheorem{remark}[theorem]{Remark}

\newcommand*{\myproofname}{Proof of the claim}

\usepackage{caption}
\usepackage{mathtools}

\newcommand{\numberset}{\mathds}

\newcommand{\Z}{\numberset{Z}}

\newcommand{\R}{\numberset{R}}

\newcommand{\F}{\numberset{F}}

\newcommand{\mC}{\mathcal{C}}

\newcommand{\mA}{\mathcal{A}}
\newcommand{\mN}{\mathcal{N}}

\newcommand{\mF}{\mathcal{F}}

\newcommand{\mD}{\mathcal{D}}
\newcommand{\mU}{\mathcal{U}}

\newcommand{\dH}{d^\textnormal{H}}
\newcommand{\drk}{d^\textnormal{rk}}

\newcommand{\st}{\, : \,}

\newcommand{\mE}{\mathcal{E}}

\newcommand{\mV}{\mathcal{V}}

\newcommand{\inn}{\textnormal{in}}
\newcommand{\out}{\textnormal{out}}

\newcommand{\CC}{\textnormal{C}}

\newcommand{\adv}{\textnormal{\textbf{A}}}

\usepackage{todonotes}

\newcommand{\bd}[1]{{\bf #1}}

\newcommand{\bfT}{\bf T}

\newcommand{\mincut}{\textnormal{min-cut}}

\newlength{\mynodespace}
\newcommand{\degin}{\partial^-}
\newcommand{\degout}{\partial^+}

\title{\textbf{Capacity of Non-Separable Networks with Restricted Adversaries}}

\usepackage{authblk}

\author[1]{Christopher Hojny}

\author[2]{Altan B. K\i l\i\c{c}\thanks{A. B. K. is supported by the Dutch Research Council through grant VI.Vidi.203.045.}}

\author[3]{Sascha Kurz}

\author[4]{Alberto Ravagnani\thanks{A. R. is supported by the Dutch Research Council through grants VI.Vidi.203.045, 
OCENW.KLEIN.539, and by the European Commission.}}
\affil[1,2,4]{
Eindhoven University of Technology, the Netherlands}
\affil[3]{University of Bayreuth, Germany}

\date{}

\begin{document}

\maketitle

\begin{abstract}
This paper investigates the problem of single-source multicasting over a communication network in the presence of restricted adversaries. When the adversary is constrained to operate only on a prescribed subset of edges, classical cut-set bounds are no longer tight, and achieving capacity
typically requires a joint design of the outer code and the inner (network) code. This stands in sharp contrast with the case of unrestricted adversaries, where capacity can be achieved by combining linear network coding with appropriate rank‑metric outer codes.
Building on the framework of \textit{network decoding}, we determine the exact one-shot capacity of one of the fundamental families of 2-level networks introduced in~\cite{beemer2023network}, and we improve the best currently known lower bounds for another such family. In addition, we introduce a new family of networks that generalizes several known examples, and derive partial capacity results that illustrate a variety of phenomena that arise specifically in the restricted-adversary setting. Finally, we investigate the concept of separability of networks with respect to both the rank metric and the Hamming metric. 
\end{abstract}

\medskip

\section{Introduction}

The field of network coding was originated in~\cite{ahlswede2000network,yeung2011network} and has received significant interest to date. In a nutshell, network coding is a communication strategy in which intermediate vertices are allowed to perform coding on the received inputs. 

In this paper, we focus on the classical single-source multicasting problem. An information source wishes to multicast packets drawn from a finite alphabet to several terminals through a directed acyclic network whose intermediate nodes may process information before forwarding it. In our setting, an omniscient adversary (fully aware of the network topology, the internal functions, and all transmitted symbols) may corrupt packets on a subset of the network edges. Adversarial error correction was studied in several works; see for instance~\cite{cai2006network,yeung2006network,yang2007refined} for classical upper bounds in network settings, as well as~\cite{matsumoto2007construction,yang2007construction} for constructions attaining them.

The case in which the adversary is restricted to act on a designated subset of edges was investigated in~\cite{ravagnani2018adversarial,beemer2023network}. The key insight of~\cite{beemer2023network} is that techniques valid for unrestricted adversaries fail when the adversary may act only on a \textit{proper} subset of the edges. In particular, classical cut-set bounds are not tight anymore, and new methods to design codes are required. 
The main proposal of~\cite{beemer2023network} 
is the use of \textit{network decoding}, in which intermediate nodes locally decode information before forwarding it. This departs significantly from the unrestricted setting, where capacity can be achieved using random linear network coding provided that the alphabet is large compared to the number of terminals; see~\cite{li2003linear,koetter2003algebraic,jaggi2007resilient,ho2006random,koetter2008coding,jaggi2005polynomial}.

Motivated by~\cite{beemer2023network}, the papers \cite{kurz2022capacity,hojny2023role} used optimization techniques to study specific networks for small alphabets. Optimization-based approaches have appeared earlier in network coding in different contexts, e.g.~\cite{lun2005achieving,lun2006minimum}. To date, solving the multicasting problem for small alphabets remains a major open challenge; see~\cite{chekuri2006average,guang2018alphabet,sun2016base} for recent progress, and~\cite{lehman2004complexity} for its computational complexity.

The purpose of this paper is to advance the understanding of multicasting with restricted adversaries, with emphasis on whether the design of the outer code can be decoupled from the design of the inner (network) code. We begin by determining the capacities of certain families of networks previously introduced in~\cite{beemer2023network}. In particular, we establish the capacity of Family~E and we improve the best known bounds for Family~B, extending some of the results of~\cite{kurz2022capacity}. We then introduce a new family of simple 2-level networks that generalizes Family~B. For this new family, we derive tight or nearly tight capacity results for a wide range of parameters, highlighting several structural phenomena that arise uniquely in the restricted-adversary regime.

In the final part of the paper, we study the notion of separability: Whether it is possible to design an inner network code that works universally with any outer code correcting a prescribed number of errors. We introduce a formal definition of separability and show that networks with restricted adversaries may fail to be separable with respect to the rank metric, and that even in the unrestricted case separability need not hold with respect to the Hamming metric.

The rest of the paper is organized as follows. Section~\ref{sec:intro} introduces the formal model and notation. Section~\ref{sec:threefam} studies Families~B and~E: We refine previously known bounds for the former and compute the exact capacity of the latter. Section~\ref{sec:newfamily} introduces a generalized family of networks and determines its capacity for various parameters. Section~\ref{sec:separable} develops the theory of separability and analyzes it with respect to the rank and Hamming metrics. Section~\ref{sec:conclusion} concludes with two research directions.

\section{Problem Statement and Communication Model}
\label{sec:intro}

We focus on single-source networks whose inputs come from a finite alphabet and the source wishes to send information packets to the terminals of the network through its intermediate nodes. These nodes can process the incoming information before transmitting it to the next nodes. We work with delay-free networks in which the vertices are memoryless and are interested in solving the \textit{multicast problem},
where each terminal of the network demands all the packets the source sends. 

In this paper, $q$ will always denote a prime power and we assume that the reader is familiar with elementary concepts from network coding, graph theory, and mixed-integer programming;
see \cite{b12}, \cite{b22} and \cite{b29}, respectively, among many others. We start by formally defining communication networks.

\begin{definition}
\label{def:network}
A (\textbf{single-source}) \textbf{network}  is a 4-tuple $\mN=(\mV,\mE, S, \bfT)$ 
that satisfies the following properties:
\begin{enumerate}
\item $(\mV,\mE)$ is a finite, directed, acyclic multigraph;
\item $S \in \mV$ is the \textbf{source} and $\bfT \subseteq \mV$ is the set of \textbf{terminals};
\item $\lvert \bfT \rvert \ge 1$ and $S \notin \bfT$;
\item the source does not have incoming edges, and terminals do not have outgoing edges;
\item \label{prnE} for any~$T \in \bfT$, there exists a directed path from~$S$ to~$T$;
\item for every~$V \in \mV \setminus (\{S\} \cup \bfT)$, there exists a directed path from~$S$ to~$V$ and from~$V$ to some~$T \in \bfT$. 
\end{enumerate}
The elements of~$\mV$ are called \textbf{vertices}  (or \textbf{nodes}), the elements of~$\mV \setminus (\{S\} \cup \bfT)$ are called \textbf{intermediate} vertices, and the elements of~$\mE$ are called \textbf{edges}.
\end{definition}

The other conditions in our communication model are as follows: The edges of a network carry precisely one symbol from an \textbf{alphabet} $\mA$
(a finite set with cardinality of at least 2). 
We model errors as being introduced by an \textbf{adversary} $\adv$ (adversarial model), who can act
on up to $t$ edges from a set $\mU$ of 
vulnerable edges. 
The adversary can act by changing the alphabet symbol carried by the edge into any other alphabet symbol of choice. We call $t$ the \textbf{adversarial power}. If~$t=0$ or $\mU=\emptyset$, then there is no adversary acting on the network. 
In our figures, the 
vulnerable edges will be drawn with dashes; see, for example, Figure \ref{fig:example}.

\begin{remark}
Restricting the adversary to a prescribed subset of edges is the main feature of the model considered in this work. This constraint is precisely what makes the computation of the one‑shot capacity substantially more delicate than in the unrestricted case, where classical cut‑set bounds are often tight and standard decoding strategies apply.
\end{remark}

We now establish the notation for the rest of the paper.

\begin{notation} 
\label{not:fixN}
Throughout the paper, $\mN=(\mV,\mE,S,{\bf T})$ denotes a network as defined in Definition~\ref{def:network}. We let $$\mincut_\mN(V,V')$$
be the minimum size of an edge-cut between the vertices $V,V' \in \mV$, and 
$$\mu_{\mN}= \min\{\mincut_\mN(S,T) \mid T \in \bf T\}.$$ We denote the
set of incoming and outgoing edges of an intermediate node $V \in \mV$ by $\inn(V)$ and~$\out(V)$, respectively, and their cardinalities by~$\degin(V)$ and~$\degout(V)$. 
\end{notation}

The edges of a network
can be partially ordered as follows.

\begin{definition}
\label{def:order}
Let $e,e'$ be two different edges in $\mN$. We say that $e$ \textbf{precedes} $e'$ if
there exists a directed path in $\mN$ that starts with $e$ and ends with~$e'$. In symbol, $e \preceq e'$.
\end{definition}

Since the partial order $\preceq$ can be extended to a total order $\le$, we fix such an order extension and display it by labeling the edges as $\{e_1, \ldots, e_{|\mE|}\}$, with the property that if $e_i \preceq e_j$, then~$i \le j$. We can therefore talk about the $i$th edge of the network.
The results in this work do not depend on the chosen total order.

The network is ``coded" by the functions assigned to the intermediate nodes. These functions specify how the information is processed throughout the network.

\begin{definition}
\label{def:nc}
A \textbf{network code} for~$(\mN,\mA)$ is a set of functions $$\mF=\{\mF_V \st V \in \mV \setminus (\{S\} \cup \bfT)\},$$ where $\mF_V \colon \mA^{\degin(V)} \to \mA^{\degout(V)} \quad \mbox{for all } V \in \mV \setminus (\{S\} \cup \bfT).$ If $\mA = \F_q$ is a finite field with $q$ elements and all functions in $\mF$ are $\F_q$-linear, then $\mF$ is called a \textbf{linear} network code.

\end{definition}

A network code uniquely determines how the incoming packets are processed by the intermediate nodes, since there is a total order defined on the edges; see \cite{ravagnani2018adversarial} for a detailed discussion. 
The transfer from the source to each of the terminals can be described by adversarial channels, in the sense of~\cite{beemer2023network,ravagnani2018adversarial}.

\begin{notation} 
\label{not:netch}
Let $\mN=(\mV,\mE, S, \bfT)$ be a network,~$T \in \bfT$ a terminal, $\mF$ a network code for~$(\mN,\mA)$, and $\mU \subseteq \mE$ an edge set. We denote by
$$\Omega[\mN, \mA, \mF, S \to T,\mU,t] : \mA^{\degout(S)} \rightarrow \mA^{\degin(T)}$$
the channel representing the transfer from $S$ to terminal~$T \in \bd{T}$ under the rules of the network code $\mF$ and at most $t$ symbols over the edges in $\mU$ can be attacked by the adversary. 
\end{notation}

Informally speaking, $\Omega[\mN, \mA, \mF, S \to T,\mU,t](x)$ for $x \in \mA^{\degout(S)}$ is the set of all possible words that can be seen on $\inn(T)$. When there is no 
adversary, this set is a singleton and there is no ambiguity in the decoding.
This is not the case in general;
see Example~\ref{ex:intro} below for an illustration.

We are interested in finding the maximum number of alphabet symbols that can be multicasted error-free in a single use of the network; see e.g. \cite{cai2006network, yeung2006network}. To define this parameter, we need to explain the concept of sending information without errors.

\begin{definition}
\label{def:outer}
Let~$\mN=(\mV,\mE,S,{\bf T})$ and $\mF$ be a network code for~$(\mN,\mA)$. An (\textbf{outer}) \textbf{code} for a network~$\mN=(\mV,\mE,S,{\bf T})$ is a nonempty subset~$\mC \subseteq \mA^{\degout(S)}$.
We say that~$\mC$ is \textbf{unambiguous} for the channel $\Omega[\mN,\mA,\mF,S \to T,\mU,t]$ for $T \in \bfT$ if for all~$\mathbf{x}, \mathbf{x}' \in \mC$ with~$\mathbf{x} \neq \mathbf{x}'$ we have
\[
    \Omega[\mN,\mA,\mF,S \to T,\mU,t](\mathbf{x}) \, \cap  \Omega[\mN,\mA,\mF,S \to T,\mU,t](\mathbf{x}') = \emptyset,
\] where $\adv$ acts on $\mU$ with adversarial power $t$. The elements of the outer code $\mC$
are called \textbf{codewords}.
\end{definition}

The key parameter of the network we study in this paper is the \textit{one-shot capacity}.

\begin{definition} 
\label{def:capacities}
Let $\mN=(\mV,\mE, S, \bfT)$ be a network, and
$\mU \subseteq \mE$ an edge set.
The (\textbf{one-shot}) \textbf{capacity}
of $(\mN,\mA,\mU,t)$, denoted by $$\CC_1(\mN,\mA,\mU,t)$$
is the largest 
real number~$\gamma$ such that there is an unambiguous outer code $\mC \subseteq \mA^{\degout(S)}$ for each channel
$\Omega[\mN,\mA,\mF,S \to T,\mU,t]$ with $T \in \bd{T}$
and a network code $\mF$ for~$(\mN,\mA)$ 
with $$\gamma=\log_{|\mA|}|\mC|.$$ 
\end{definition}

The only capacity considered in this paper is the one-shot capacity. Therefore, from now on, we omit the term ``one-shot''.
We continue with the state-of-the-art bound for the one-shot capacity of a network.

\begin{theorem}[Generalized Network Singleton Bound; see~\cite{ravagnani2018adversarial}] 
\label{thm:sbound}
Let $\mN=(\mV,\mE, S, \bfT)$ be a network, $\mA$ an alphabet,~$\mU \subseteq \mE$,
and $t \ge 0$ an integer. 
We have
\[
\CC_1(\mN,\mA,\mU,t) \le \min\limits_{T \in \bfT} \, \min\limits_{\mE'} \left(
\lvert \mE'\setminus \mU \rvert + \max\{0,\lvert \mE' \cap \mU \rvert-2t\} \right)
\]
where $\mE' \subseteq \mE$ ranges over edge-cuts between $S$ and $T$. In particular,
if $\mU = \mE$ then $\CC_1(\mN,\mA,\mE,t) \le \max\{0,\mu_{\mN}-2t\}$.
\end{theorem}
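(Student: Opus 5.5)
Fix a terminal $T \in \bfT$, an edge-cut $\mE'$ between $S$ and $T$, a network code $\mF$, and an outer code $\mC$ that is unambiguous for $\Omega[\mN,\mA,\mF,S \to T,\mU,t]$. It suffices to show
\[
|\mC| \le |\mA|^{\lvert \mE'\setminus \mU \rvert + \max\{0,\lvert \mE' \cap \mU \rvert-2t\}}.
\]
Minimizing over $T$ and $\mE'$ then gives the bound. For the special case $\mU=\mE$, take $\mE'$ to be a minimum cut for the terminal achieving $\mu_\mN$. Then $\mE'\setminus\mU=\emptyset$ and the bound reads $\max\{0,\mu_\mN-2t\}$.

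The plan is a pigeonhole-plus-splicing argument. We may shrink $\mE'$ to a minimal cut, since that only decreases the bound. Because the network is acyclic and $\mE'$ separates $S$ from $T$, the word on $\inn(T)$ is a deterministic function of the symbols carried on $\mE'$. This holds even in the presence of errors, provided the adversary acts only on edges of $\mE'$; errors placed on $\mE'$ simply overwrite the values there. This is the step needing the most care: with the fixed total order on the edges, the symbols on edges after the cut depend only on the (possibly corrupted) cut values. I would verify this by induction along the order, using that every path from $S$ to $T$ meets $\mE'$.

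Write $a=\lvert \mE'\setminus\mU\rvert$ and $b=\lvert\mE'\cap\mU\rvert$. If $b\le 2t$, the claim is $|\mC|\le|\mA|^a$. If two codewords agree on the values of $\mE'\setminus\mU$, the adversary can make both produce the same cut word: it has enough budget to set the $b\le 2t$ vulnerable cut edges. More precisely, split $\mE'\cap\mU$ into sets $P_1,P_2$ of size at most $t$. For $\mathbf{x}$, change $P_1$ to match $\mathbf{x}'$'s values. For $\mathbf{x}'$, change $P_2$ to match $\mathbf{x}$'s values. Both then carry the common hybrid word on $\mE'$, so the outputs at $T$ coincide, contradicting unambiguity. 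Hence codewords are determined by their $\mE'\setminus\mU$ values, giving $|\mC|\le |\mA|^a$.

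If $b>2t$, fix a set $Q\subseteq \mE'\cap\mU$ of size $b-2t$. Suppose $|\mC|>|\mA|^{a+b-2t}$. Then by pigeonhole two distinct codewords agree, in the error-free transmission, on all edges of $(\mE'\setminus\mU)\cup Q$. They differ on at most $2t$ cut edges, all lying in the vulnerable set. The same splicing as above, with $P_1\cup P_2=(\mE'\cap\mU)\setminus Q$ and $|P_i|\le t$, yields a common output at $T$, again a contradiction. The only subtlety is that the cut values are computed from the error-free transmission. This is legitimate because the adversary corrupts only cut edges, which by the acyclicity/ordering argument do not influence other cut edges' error-free values in a way that matters. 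Indeed, we take cut edges with no cut edge preceding another, which minimality ensures. Making this last point precise is the main obstacle; if minimality does not directly give it, I would instead choose $\mE'$ among cuts attaining the minimum whose edges are pairwise incomparable under $\preceq$.
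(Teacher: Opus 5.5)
\paragraph{The gap.} The paper gives no proof of this statement (it is quoted from \cite{ravagnani2018adversarial}), so your proposal has to stand on its own. It has a genuine gap exactly at the step you flagged.

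Your splicing step needs the following: when $\mathbf{x}$ is corrupted on $P_1$ (and $\mathbf{x}'$ on $P_2$), every other edge of $\mE'$ still carries its error-free value. In other words, no corrupted edge may precede another edge of $\mE'$. Minimality of the cut does not give this. Take vertices $S,A,B,T$ and edges $\varepsilon_1\colon S\to A$, two parallel edges $f_1,f_2\colon S\to B$, $g\colon A\to B$, $h\colon A\to T$, and $\varepsilon_2,\varepsilon_3\colon B\to T$. Then $\mE'=\{\varepsilon_1,\varepsilon_2,\varepsilon_3\}$ is a minimal cut with $\varepsilon_1\preceq\varepsilon_2$ and $\varepsilon_1\preceq\varepsilon_3$. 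Corrupting $\varepsilon_1$ changes what $B$ puts on $\varepsilon_2$, so your ``common hybrid word'' is in general not reachable.

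Your fallback is not available either. Take $\mU=\{\varepsilon_1,\varepsilon_3\}$ and $t=1$. A cut of value at most $1$ contains at most one non-vulnerable edge. The paths through $f_1$ and $f_2$ then force that edge to be $\varepsilon_2$, and the remaining paths force $\varepsilon_1$ and $\varepsilon_3$. So $\mE'$ is the \emph{unique} minimizing cut, and it is not an antichain.

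\paragraph{The statement fails for this cut.} This is not a technicality that a cleverer splicing could remove: for this cut the stated inequality is false. Let $\mA=\{0,1,2\}$, and let $A$ forward its input on $g$ and $h$. Let $B$ read an index $i\in\{1,2,3,4\}$ from $(f_1,f_2)$ and the symbol $u$ on $g$. It sends $\gamma_i(u)$ on $\varepsilon_2$ and a constant $\eta_i$ on $\varepsilon_3$, where:
\begin{itemize}
\item $(\gamma_i(0),\gamma_i(1),\gamma_i(2))$ equals $(0,2,0)$, $(1,2,0)$, $(2,0,1)$, $(2,1,2)$ for $i=1,2,3,4$;
\item $(\eta_1,\eta_2,\eta_3,\eta_4)=(0,1,0,1)$.
\end{itemize}
Codeword $i$ puts $a_i$ on $\varepsilon_1$, with $(a_1,a_2,a_3,a_4)=(0,0,1,1)$, and the index $i$ on $(f_1,f_2)$. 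The terminal sees $(u,\gamma_i(u),\eta_i)$ if $\varepsilon_1$ is set to $u$, or $(a_i,\gamma_i(a_i),v)$ if $\varepsilon_3$ is set to $v$. A direct check shows the four fan-out sets (five triples each) are pairwise disjoint. Hence $\CC_1(\mN,\mA,\mU,1)\ge\log_3 4>1$, while the right-hand side evaluated at $\mE'$ equals $1$.

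\paragraph{What your argument does prove.} It proves the bound with $\mE'$ restricted to antichain cuts, i.e.\ cuts whose edges are pairwise $\preceq$-incomparable. For those cuts your argument is the standard one and is correct: the output factors through the cut word, corruptions on cut edges only overwrite those positions, and the Singleton splicing plus pigeonhole goes through.

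Your special case $\mU=\mE$ relies on an arbitrary minimum cut, which need not be an antichain. It can be rescued by ordering the splicing along the total order:
\begin{itemize}
\item fingerprint the error-free values on the first $|\mE'|-2t$ cut edges;
\item corrupt $\mathbf{x}$ on the next $t$ cut edges, copying $\mathbf{x}'$;
\item corrupt $\mathbf{x}'$ on the last $t$ cut edges, copying whatever $\mathbf{x}$ now carries.
\end{itemize}
Then every cut edge that must agree without corruption precedes all corrupted ones. What breaks in the example above is precisely a non-vulnerable cut edge ($\varepsilon_2$) lying downstream of a vulnerable one ($\varepsilon_1$) that must be corrupted.
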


We give the following example to further explain the concepts introduced so far.

\begin{example}
\label{ex:intro}
Let $\mN$ be the network depicted in Figure \ref{fig:example}. The fixed total order is illustrated by the labeling of the edges. Recall that that the vulnerable edges~$\mU$ are indicated by dashed edges in the figure. 

One can check by hand that removing one edge is not enough to cut the connection between the source and any of the terminals. Thus $\mu_{\mN}=2$. 
Theorem~\ref{thm:sbound}
gives $\CC_1(\mN,\mA,\mE,1)=0$. The capacity may sometimes not increase by 
restricting the adversary. For instance, let $\adv$ be an adversary that acts on $\mU=\{e_1,e_4,e_7,e_8\}$ with adversarial power 1.  Let $\mA = \F_3$ and~$\mF = \{\mF_{V_1},\mF_{V_2},\mF_{V_3},\mF_{V_4}\}$ with 
$$\mF_{V_1}(a) = (a,a),\ \mF_{V_2}(a) = (2a,a),\ \mF_{V_3}(a,b) = a+b,\ \mF_{V_4}(a) = (a,2a)$$ for $a,b \in \F_3$. Suppose $t=1$, that is, the adversary can change a symbol on one of the edges in~$\mU$. Theorem \ref{thm:sbound} gives $\CC_1(\mN,\mA,\mU,1)=0$ (for example, take $T_1$ and $\mE'=\{e_1,e_4\}$). 

\end{example}

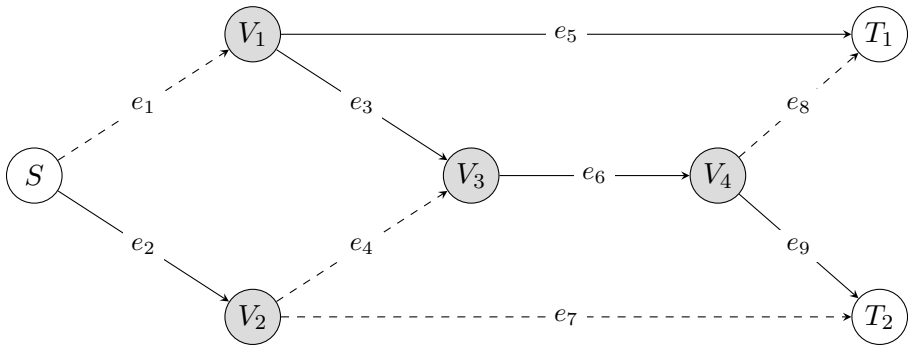
\begin{figure}[h!]
\centering
\begin{tikzpicture}
\tikzset{vertex/.style = {shape=circle,draw,inner sep=0pt,minimum size=1.9em}}
\tikzset{nnode/.style = {shape=circle,fill=myg,draw,inner sep=0pt,minimum
size=1.9em}}
\tikzset{edge/.style = {->,> = stealth}}
\tikzset{dedge/.style = {densely dotted,->,> = stealth}}
\tikzset{ddedge/.style = {dashed,->,> = stealth}}

\node[vertex] (S1) {$S$};

\node[shape=coordinate,right=\mynodespace of S1] (K) {};

\node[nnode,above=0.6\mynodespace of K] (V1) {$V_1$};

\node[nnode,below=0.6\mynodespace of K] (V2) {$V_2$};

\node[nnode,right=\mynodespace of K] (V3) {$V_3$};

\node[nnode,right=\mynodespace of V3] (V4) {$V_4$};

\node[vertex,right=3\mynodespace of V1 ] (T1) {$T_1$};

\node[vertex,right=3\mynodespace of V2] (T2) {$T_2$};

\draw[ddedge,bend left=0] (S1)  to node[fill=white, inner sep=3pt]{\small $e_1$} (V1);

\draw[edge,bend left=0] (S1) to  node[fill=white, inner sep=3pt]{\small $e_2$} (V2);

\draw[edge,bend left=0] (V1) to  node[fill=white, inner sep=3pt]{\small $e_3$} (V3);

\draw[ddedge,bend left=0] (V4)  to node[fill=white, inner sep=3pt]{\small $e_{8}$} (T1);

\draw[edge,bend left=0] (V4)  to node[fill=white, inner sep=3pt]{\small $e_{9}$} (T2);

\draw[edge,bend left=0] (V1)  to node[fill=white, inner sep=3pt]{\small $e_{5}$} (T1);

\draw[ddedge,bend left=0] (V2)  to node[fill=white, inner sep=3pt]{\small $e_{7}$} (T2);

\draw[ddedge,bend left=0] (V2) to  node[fill=white, inner sep=3pt]{\small $e_4$} (V3);

\draw[edge,bend left=0] (V3) to  node[fill=white, inner sep=3pt]{\small $e_{6}$} (V4);

\end{tikzpicture}
\caption{An example of a network.\label{fig:example}}
\end{figure}

Computing the capacity of networks becomes very difficult as they get larger and more complex. In \cite{beemer2023network}, it was shown how the capacity of an arbitrary network $\mN$ is upper bounded by the capacity of a simple 2-level network 
that can be constructed from it.

\begin{definition}
\label{def:2-level}
Let $\mN=(\mV,\mE,S,\mathbf{T})$ be a network. We say that $\mN$ is a  \textbf{simple 2-level} network if $\lvert \mathbf{T} \rvert=1$ and any path from $S$ to the terminal is of length $2$.
\end{definition}

The same paper introduces five families of simple 2-level networks (labeled from A to E) as the building blocks of the theory. The capacities of Families C and D 
have been computed in \cite{beemer2023network} as well. In this paper, we focus on Families B and E.

\section{Two Families of Simple 2-Level Networks}
\label{sec:threefam}

 In this section, we give some computational results on the networks of Family B using methods from \textit{constraint satisfaction}, and
 fully compute the capacity of the networks of Family E. All networks in these families
 have two intermediate nodes~$V_1$ and $V_2$, and the adversary can only act on the outgoing edges of the source~$S$, which we denote by $\mU_S$. 
 Any such network will be denoted by
$$([\degin(V_1),\degin(V_2)],[\degout(V_1),\degout(V_2)]).$$

It was proven in \cite[Lemma 6.5]{beemer2023network} that, 
whenever the condition $\degin(V_1) \le \degout(V_1)$ holds for a simple 2-level network, one can assume $\degin(V_1) = \degout(V_1)$ and that $\mF_{V_1}$ is the identity function.

Before analyzing the two families,
we recall the definition of the
\textit{Diamond Network}, which
represents the smallest example where the cut-set bound of Theorem \ref{thm:sbound} is not sharp in the case $\mU \neq \mE$, and one needs a non-standard technique to achieve capacity.

\begin{theorem}[Diamond Network; see~\cite{beemer2022network}]
\label{thm:diamond}
Let $\mD=([1,2],[1,1])$ and $|\mA|=q$. We have $$\CC_1(\mD,\mA,\mU_S,1) = \log_q(q-1).$$
\end{theorem}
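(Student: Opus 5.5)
The Diamond Network has three source edges $e_1$ (to $V_1$) and $e_2,e_3$ (to $V_2$), all vulnerable, plus one edge $V_1\to T$ and one edge $V_2\to T$. The plan is to prove the two inequalities $\CC_1(\mD,\mA,\mU_S,1)\ge\log_q(q-1)$ and $\CC_1(\mD,\mA,\mU_S,1)<1$ separately. Since the capacity is $\log_q|\mC|$ for an integer $|\mC|\le q$, the upper bound amounts to ruling out an unambiguous outer code of size $q$ for every network code.

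\textbf{Achievability.} By \cite[Lemma 6.5]{beemer2023network}, take $\mF_{V_1}$ to be the identity. Fix a symbol $\star\in\mA$ and let
\[
\mC=\{(a,a,a)\st a\in\mA\setminus\{\star\}\}.
\]
Define $\mF_{V_2}(b,c)=b$ if $b=c$, and $\mF_{V_2}(b,c)=\star$ otherwise.

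The terminal decodes $(y_1,y_2)$ as follows: if $y_2\neq\star$ it outputs $y_2$, and otherwise it outputs $y_1$. This is correct in every case:
\begin{itemize}
\item If the error is on $e_1$, then $V_2$ sees $(a,a)$ and forwards $a\neq\star$, which is decoded.
\item If the error is on $e_2$ or $e_3$, then $V_2$ sees $b\neq c$ and sends $\star$, and $e_1$ is clean, so $y_1=a$.
\item With no error, $y_2=a\neq\star$.
\end{itemize}
Hence $\mC$ is unambiguous and $|\mC|=q-1$.

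\textbf{Converse.} Suppose $\mC$ is unambiguous with $|\mC|=q$, under an arbitrary network code $(\mF_{V_1},\mF_{V_2})$. First I reduce to a clean structure.
\begin{itemize}
\item Codewords must be pairwise distinct in the first coordinate. If $x,x'$ agree there, then $V_1$'s output is identical, and the adversary only needs $V_2$'s output to coincide. Changing $x_2\to x'_2$ for $x$, and leaving $x'$ untouched with its own $(x'_2,x'_3)$, does not directly give a clash. The real argument is this: $x$ with $e_2$ changed sends $(x'_2,x_3)$, and $x'$ with $e_3$ changed sends $(x'_2,x_3)$. Same inputs everywhere, so there is a clash. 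Hence the $q$ first coordinates cover $\mA$, so $\mF_{V_1}$ may be taken to be the identity and each codeword is $(a,b_a,c_a)$ for $a\in\mA$.
\item Similarly, the pairs $(b_a,c_a)$ must be distinct, with distinct $\mF_{V_2}$-values. Otherwise corrupting $e_1$ makes two codewords identical at $T$.
\end{itemize}
So $f(a):=\mF_{V_2}(b_a,c_a)$ is a bijection $\mA\to\mA$.

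Now take two codewords $a\neq a'$. The adversary corrupts $e_2$ of $x_a$ to give $(b_{a'},c_a)$, and the terminal sees $(a,\mF_{V_2}(b_{a'},c_a))$. Because $f$ is a bijection, some codeword $a''$ has $f(a'')=\mF_{V_2}(b_{a'},c_a)$, and the clean transmission of $a''$ is $(a'',f(a''))$.

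Unambiguity forces the following for every $a$ and every corruption $(b,c_a)$ or $(b_a,c)$ of $V_2$'s input: $\mF_{V_2}(b,c_a)=f(a)$. Otherwise the value equals $f(a'')$ for some $a''\neq a$, and that codeword can in turn be sent with $e_1$ corrupted to $a$, giving $(a,f(a''))$, a clash. So $\mF_{V_2}(\cdot,c_a)\equiv f(a)$ and $\mF_{V_2}(b_a,\cdot)\equiv f(a)$.

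Then for $a\neq a'$, evaluate $\mF_{V_2}(b_a,c_{a'})$. The first identity gives $f(a)$ and the second gives $f(a')$, which are different: a contradiction. Therefore $|\mC|\le q-1$.

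\textbf{Main obstacle.} The hardest step is the careful handling of the first reduction, where one must show $V_1$'s information is injective on $\mC$ via a clash in which both codewords are corrupted. That is fine because each channel use is separate: the adversary corrupts $x$ on $e_2$ and $x'$ on $e_3$ independently. It still needs to be checked exactly.
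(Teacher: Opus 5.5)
Your proof is correct. Its achievability half is exactly the paper's construction. The paper does not reprove Theorem~\ref{thm:diamond} itself (it cites it), but recovers it as the case $t=1$, $m=0$ of Theorem~\ref{thm:familyE}, where Lemma~\ref{lemma1} uses a repetition code over $q-1$ symbols with an alarm symbol ``!'' at $V_2$, just as you do with $\star$.

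Your converse takes a different route from the paper's Lemma~\ref{lemma0}. The paper first invokes the ``without loss of generality'' of Lemma~\ref{lemma-1}, asserted there without proof, that an optimal code is a repetition code. It then counts output states of $V_2$: the $|\mC|$ clean inputs $(c,c)$ need distinct outputs, and a mixed input such as $(a,b)$ cannot share an output with them, so $|\mC|+1\le q$.

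You instead work with an arbitrary code of size $q$, in four steps:
\begin{enumerate}
\item The $e_2$/$e_3$ clash forces the first coordinates to be a permutation of $\mA$.
\item The $e_1$ clash makes $f$ a bijection.
\item Surjectivity of $f$ then forces $\mF_{V_2}$ to be constant on the row and column through each pair $(b_a,c_a)$.
\item The mixed input $(b_a,c_{a'})$ yields the contradiction.
\end{enumerate}
The underlying clash is the same in both arguments: a mixed $V_2$-input reachable from two codewords, combined with an $e_1$-corruption of a third. Your version is fully self-contained and avoids the repetition-code reduction. The paper's pigeonhole count, however, is what extends to $t\ge 2$. There the bound $\lfloor (q-\alpha)/(m+1)\rfloor$ lies far below $q-1$, so merely excluding $|\mC|=q$ would not suffice.

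Three small presentation points:
\begin{itemize}
\item Replacing $\mF_{V_1}$ by the identity is justified because the terminal's view under any $\mF_{V_1}$ is a function of its view under the identity, so collisions can only disappear. It is not justified by the first coordinates covering $\mA$.
\item Excluding $|\mC|=q$ suffices because subsets of unambiguous codes are unambiguous. Alternatively, your first step already gives $|\mC|\le q$ for every unambiguous code.
\item In the last step the two identities yield $f(a')$ and $f(a)$ in the opposite order from what you wrote. This is harmless.
\end{itemize}
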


\subsection{Family B}

The networks of Family B are parametrized by a positive integer $s$ and they are defined as follows:
$$\mathfrak{B}_s = ([1,s+1],[1,s]);$$
see Figure \ref{fig:familyb} for an illustration. Since $s=1$ gives the Diamond Network, we assume for the rest of this subsection that $s \ge 2$.
As for upper bounds,
Theorem \ref{thm:sbound} gives $\CC_1(\mathfrak{B}_s,\mA,\mU_S,1) \le s$.
However, this bound is not achievable, and in \cite{beemer2023network} it was shown that~$\CC_1(\mathfrak{B}_s,\mA,\mU_S,1) < s$. 

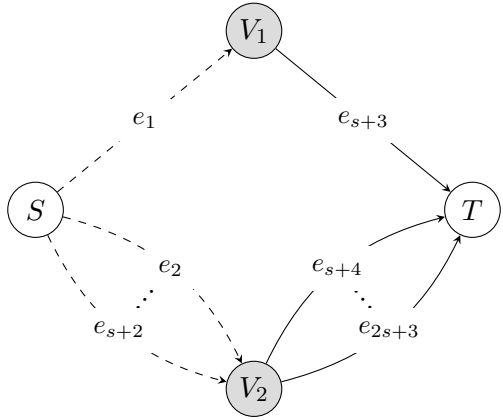
\begin{figure}[htbp]
\centering

\begin{tikzpicture}
\tikzset{vertex/.style = {shape=circle,draw,inner sep=0pt,minimum size=1.9em}}
\tikzset{nnode/.style = {shape=circle,fill=myg,draw,inner sep=0pt,minimum
size=1.9em}}
\tikzset{edge/.style = {->,> = stealth}}
\tikzset{dedge/.style = {densely dotted,->,> = stealth}}
\tikzset{ddedge/.style = {dashed,->,> = stealth}}

\node[vertex] (S1) {$S$};

\node[shape=coordinate,right=\mynodespace of S1] (K) {};

\node[nnode,above=0.8\mynodespace of K] (V1) {$V_1$};
\node[nnode,below=0.8\mynodespace of K] (V2) {$V_2$};

\node[vertex,right=2\mynodespace of S1] (T) {$T$};

\draw[ddedge,bend left=0] (S1) to  node[fill=white, inner sep=5pt]{\small $e_1$} (V1);

\draw[draw= none,bend left=5] (S1)  to node[sloped,fill=white, inner sep=5pt]{$\vdots$} (V2);
\draw[ddedge,bend left=25] (S1) to  node[fill=white, inner sep=5pt]{\small $e_{2}$} (V2);
\draw[ddedge,bend right=25] (S1) to  node[fill=white, inner sep=5pt]{\small $e_{s+2}$} (V2);

\draw[edge,bend left=0] (V1) to  node[fill=white, inner sep=5pt]{\small $e_{s+3}$} (T);

\draw[draw= none,bend left=5] (V2)  to node[sloped,fill=white, inner sep=5pt]{$\vdots$} (T);
\draw[edge,bend left=25] (V2) to  node[fill=white, inner sep=5pt]{\small $e_{s+4}$} (T);
\draw[edge,bend right=25] (V2) to  node[fill=white, inner sep=5pt]{\small $e_{2s+3}$} (T);

\end{tikzpicture} 

\caption{Family B}\label{fig:familyb}
\end{figure}

Regarding lower bounds,
when $|\mA| \ge s$ we have
$\CC_1(\mathfrak{B}_s,\mA,\mU_S,1) \ge s-1$; see~\cite{beemer2023network}. Moreover, for some specific values of $(s,|\mA|)$, the exact value of $\CC_1(\mathfrak{B}_s,\mA,\mU_S,1)$ is calculated in \cite{kurz2022capacity}. For example, when $|\mA|=2$, the number $\CC_1(\mathfrak{B}_s,\mA,\mU_S,1)$ is computed for all $s\in \{2,3,\ldots,12\}.$ There are still many open cases, some of which are listed in \cite[Table~4]{kurz2022capacity} with the best known bounds.
\begin{lemma}[see \cite{kurz2022capacity}]
\label{lem:kurz22}
The following hold.
\begin{enumerate}
    \item If $\lvert \mA \rvert =3$, then $\CC_1(\mathfrak{B}_4,\mA,\mU_S,1) \in \{\log_3(x) \mid 35 \le x \le 38\}.$
    \item If $\lvert \mA \rvert = 4$, then $\CC_1(\mathfrak{B}_2,\mA,\mU_S,1) \in \{\log_4(x) \mid 9 \le x \le 11\}$.
    \item If $\lvert \mA \rvert=4$, then $\CC_1(\mathfrak{B}_3,\mA,\mU_S,1) \in \{\log_4(x) \mid 31 \le x \le 37\}$. 
    \item If $\lvert \mA \rvert=5$, then $\CC_1(\mathfrak{B}_2,\mA,\mU_S,1) \in \{\log_5(x) \mid 15 \le x \le 17\}.$
\end{enumerate}
\end{lemma}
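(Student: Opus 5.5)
This lemma is quoted from \cite{kurz2022capacity}, so the plan reconstructs how those bounds arise. Each item asserts a lower bound, given by an explicit code, and an upper bound, given by an exhaustive or optimization-based argument. Throughout, by \cite[Lemma 6.5]{beemer2023network} we take $\mF_{V_1}$ to be the identity, since $\degin(V_1)=\degout(V_1)=1$. Only $\mF_{V_2}\colon \mA^{s+1}\to\mA^{s}$ remains to be chosen.

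First, I reformulate unambiguity combinatorially. Write a codeword as $(x,\mathbf{y})\in\mA\times\mA^{s+1}$. The adversary changes at most one coordinate, so the terminal receives $(x',\mF_{V_2}(\mathbf{y}'))$, where either $x'\neq x$ and $\mathbf{y}'=\mathbf{y}$, or $x'=x$ and $\mathbf{y}'$ lies in the Hamming ball of radius $1$ around $\mathbf{y}$. Hence $\mC$ is unambiguous if and only if, for distinct $(x,\mathbf{y}),(\tilde x,\tilde{\mathbf{y}})\in\mC$, the following two conditions hold.
\begin{enumerate}
\item If $x=\tilde x$, then $\mF_{V_2}(B_1(\mathbf{y}))\cap\mF_{V_2}(B_1(\tilde{\mathbf{y}}))=\emptyset$.
\item If $x\neq\tilde x$, then $\mF_{V_2}(\mathbf{y})\notin\mF_{V_2}(B_1(\tilde{\mathbf{y}}))$ and $\mF_{V_2}(\tilde{\mathbf{y}})\notin\mF_{V_2}(B_1(\mathbf{y}))$. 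The conflict in which both $x$ and $\mathbf{y}$ are altered cannot occur, since only one edge may be attacked.
\end{enumerate}
This turns the capacity question into the problem of finding a function $\mF_{V_2}$ together with a maximum-size code satisfying these conditions.

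For the lower bounds ($35$, $9$, $31$, $15$), I would exhibit explicit pairs $(\mF_{V_2},\mC)$ and check the two conditions by computer. To find them, I would use a local search or integer programming model. Partitioning $\mC$ into classes $\mC_a=\{\mathbf{y} : (a,\mathbf{y})\in\mC\}$, condition 1 says each $\mC_a$ is a code for a "decode-at-$V_2$'' channel. Condition 2 says that distinct classes must not land in overlapping image regions in a one-sided sense. A natural starting point is to take $\mF_{V_2}$ to be a projection or a syndrome-like map, and then to optimize the assignment of classes.

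For the upper bounds ($38$, $11$, $37$, $17$), I would bound $\sum_a|\mC_a|$ via a counting argument on the $q^s$ possible outputs of $V_2$. Each $\mathbf{y}\in\mC_a$ claims the image of its ball, a set $\mF_{V_2}(B_1(\mathbf{y}))$, and these sets are disjoint within a class. Across classes the images must avoid the centres $\mF_{V_2}(\mathbf{y})$. This yields a packing-type linear program, and I would strengthen it with symmetry breaking and case analysis on the fibre structure of $\mF_{V_2}$. The main obstacle is precisely this step: $\mF_{V_2}$ ranges over an astronomically large space of functions. I would first exploit symmetry, namely permutations of coordinates and of alphabet symbols. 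I would then classify the fibre-size profiles of $\mF_{V_2}$ that are compatible with a large code and solve a MIP or constraint-satisfaction instance for each profile. I expect a residual gap to remain, which is why the statement only gives intervals.
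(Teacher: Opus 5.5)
The paper gives no proof of this lemma. It is quoted from \cite{kurz2022capacity}, where the numbers were obtained by optimisation-based computer search. Your normalisation $\mF_{V_1}=\mathrm{id}$ and your two unambiguity conditions are correct; they also force the centres $\mF_{V_2}(\mathbf{y})$ to be pairwise distinct. Beyond that, the proposal establishes none of the eight numbers: it says how one might search, not what the search yields. For the lower bounds you must exhibit and verify pairs $(\mF_{V_2},\mC)$ of sizes $35$, $9$, $31$, $15$. Appendix~\ref{sec:concretecodes} shows what such a certificate looks like for the improved sizes $10$ and $16$ in items~2 and~4, found with the SAT version of the model in Appendix~\ref{sec:MINLPmodel}; these also certify $9$ and $15$. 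For the upper bounds $38$, $11$, $37$, $17$ you need a proof or an executed, reproducible infeasibility computation. Your plan for this part would not work as written.

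\textbf{Why the upper-bound plan falls short.} The counting you sketch is far too weak. If $B_1(\mathbf{y})$ meets another codeword's ball, the common point must go to a non-centre, so $\mF_{V_2}(B_1(\mathbf{y}))$ contains a non-centre. These sets are disjoint within a class, so $\lvert\mC_a\rvert\le q^s-\lvert\mC\rvert$ with $q=\lvert\mA\rvert$. If no ball is isolated, this gives only $\lvert\mC\rvert\le q^{s+1}/(q+1)$, about $60$ for $q=3$, $s=4$ instead of $38$. Classifying fibre-size profiles of a map $\mA^{s+1}\to\mA^{s}$ does not usefully shrink the search space either.

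\textbf{The missing idea.} You need a normalisation that removes $\mF_{V_2}$ from the problem. From your conditions:
\begin{enumerate}
\item the $\mathbf{y}$'s are pairwise at Hamming distance at least $2$, and at least $3$ within a class;
\item $\mF_{V_2}(B_1(\mathbf{y}))$ contains no centre other than $\mF_{V_2}(\mathbf{y})$, so every point lying in two or more balls is mapped to a non-centre;
\item redefining $\mF_{V_2}$ so that each point lying in exactly one ball $B_1(\mathbf{y})$ is sent to $\mF_{V_2}(\mathbf{y})$ only shrinks the sets $\mF_{V_2}(B_1(\cdot))$, hence preserves unambiguity.
\end{enumerate}
The problem is then purely combinatorial. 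Choose the $\mathbf{y}$'s, their class labels, and a colouring of the multiply covered points with $q^s-\lvert\mC\rvert$ colours, such that distinct codewords of the same class see disjoint colour sets. An ILP or SAT solver, with symmetry breaking under coordinate and symbol permutations, can settle that finite problem. Until such a computation (or a sharper hand argument) is carried out, the upper bounds are not proved.
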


Our first result improves
two lower bounds of Lemma \ref{lem:kurz22}.
For $s=2$ and $|\mA|=4$, we were able to find an unambiguous code of cardinality 10 
by modeling the code search as a satisfiability problem; see Appendix~\ref{sec:MINLPmodel} for the details.
To model and solve this satisfiability problem, we used the Python module \texttt{pysat} with the \texttt{Glucose4} solver~\cite{glucose}.\footnote{The code used in our experiments is available at~\cite{zenodo} and \href{https://github.com/christopherhojny/supplement_network_codes}{GitHub}.}
Similarly, we constructed an unambiguous code of cardinality 16 for $s=2$ and $|\mA|=5$.
In Appendix~\ref{sec:concretecodes} we include examples of codes with the mentioned cardinalities.

\begin{proposition}
\label{prop:familyB}
We have 
\begin{enumerate}
    \item $\CC_1(\mathfrak{B}_2,\mA,\mU_S,1) \ge \log_4(10)$ if $\lvert \mA \rvert=4$, 
    \item $\CC_1(\mathfrak{B}_2,\mA,\mU_S,1) \ge \log_5(16)$ if $\lvert \mA \rvert=5$. 
\end{enumerate}
\end{proposition}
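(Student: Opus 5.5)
The plan is to prove both lower bounds constructively. For each case I will exhibit a network code $\mF=\{\mF_{V_1},\mF_{V_2}\}$ for $\mathfrak{B}_2=([1,3],[1,2])$ together with an outer code $\mC\subseteq\mA^{4}$ with $|\mC|=10$ when $|\mA|=4$, and $|\mC|=16$ when $|\mA|=5$. I will then verify that $\mC$ is unambiguous for the channel $\Omega[\mathfrak{B}_2,\mA,\mF,S\to T,\mU_S,1]$. Definition~\ref{def:capacities} then gives the claimed bounds immediately.

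First I reduce the search space. Since $\degin(V_1)=\degout(V_1)=1$, by \cite[Lemma 6.5]{beemer2023network} I may take $\mF_{V_1}$ to be the identity. The only design freedom is then $\mF_{V_2}\colon\mA^3\to\mA^2$ and the code $\mC$. A codeword is $x=(x_1,x_2,x_3,x_4)$. The terminal sees $(x_1,\mF_{V_2}(x_2,x_3,x_4))$, except that the adversary may alter one coordinate. So the output set $\Omega(x)$ consists of:
\begin{itemize}
\item $(a,\mF_{V_2}(x_2,x_3,x_4))$ for every $a\in\mA$, from an attack on $e_1$;
\item $(x_1,\mF_{V_2}(y))$ for every $y\in\mA^3$ at Hamming distance at most $1$ from $(x_2,x_3,x_4)$, from an attack on $e_2,e_3,e_4$.
\end{itemize}
Unambiguity is the condition that these sets are pairwise disjoint over distinct codewords. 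This is a finite combinatorial condition that a computer can check directly.

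The construction itself is obtained by encoding the existence question as a SAT instance, as in Appendix~\ref{sec:MINLPmodel}. I introduce Boolean variables for membership of each word of $\mA^4$ in $\mC$ and for each value of $\mF_{V_2}$. I add clauses forbidding two codewords from producing a common output, and a cardinality constraint $|\mC|\ge 10$ (respectively $\ge 16$). I then solve the instance with \texttt{Glucose4} via \texttt{pysat}. To keep the instance tractable I would first break symmetry: permute the alphabet on each edge, and permute the two outgoing edges of $V_2$ and the three incoming ones. If the instance is still too hard, I would restrict $\mF_{V_2}$ to a structured class, for instance functions whose fibres are unions of Hamming balls, in the spirit of the Diamond Network construction of Theorem~\ref{thm:diamond}.

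Once a satisfying assignment is found, the proof consists of listing $\mF_{V_2}$ and $\mC$ explicitly, as in Appendix~\ref{sec:concretecodes}. One then checks, by hand or by an independent short program, that the sets $\Omega(x)$ are pairwise disjoint. The main obstacle is the search itself, especially for $|\mA|=5$. There $\mF_{V_2}$ ranges over $25^{125}$ functions, so I expect that symmetry breaking and incremental tightening of the cardinality bound will be essential to make the solver terminate. The verification step is routine by comparison.
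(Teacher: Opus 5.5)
Your proposal follows the paper's proof essentially verbatim. The paper likewise takes $\mF_{V_1}$ to be the identity, searches for $\mC$ and $\mF_{V_2}$ with a SAT model solved by \texttt{Glucose4} through \texttt{pysat} (your membership-indicator encoding with a cardinality constraint is only a cosmetic variant of its fixed-size codeword-slot model in Appendix~\ref{sec:MINLPmodel}), and certifies the bounds by listing explicit codes of sizes $10$ and $16$, with all fibres of $\mF_{V_2}$, in Appendix~\ref{sec:concretecodes}; note that this certificate is the actual proof, so your argument is complete only once the solver has returned such a pair $(\mC,\mF_{V_2})$ and it has been checked against the disjointness condition you wrote down.
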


\subsection{Family E}
The networks of Family~E are parametrized by a 
positive integer $t$. For $t \in \Z_{\ge 1}$,
define $$\mathfrak{E}_t = ([t,t+1],[1,1]);$$ see Figure~\ref{fig:familye} for an illustration.
Note that
in Family E almost half of the edges can be attacked,
while in Family B only one of the edges is vulnerable.

Theorem \ref{thm:sbound} reads $\CC_1(\mathfrak{E}_t,\mA,\mU_S,t) \le 1$. In \cite{beemer2023network} it is shown that this inequality is strict, that is, $\CC_1(\mathfrak{A}_t,\mA,\mU_S,t) < 1$. We compute the exact capacity in Theorem \ref{thm:familyE} below. The proof relies on four lemmas that we prove separately, two for even adversarial power and two for odd adversarial power.

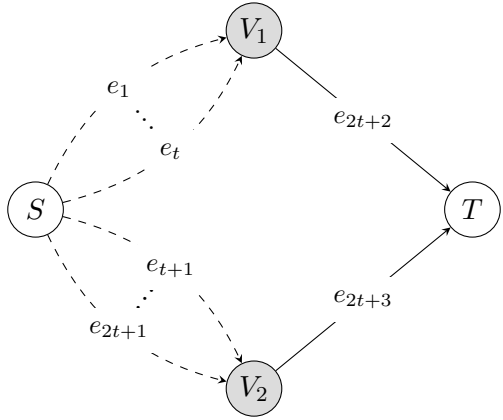
\begin{figure}[!h]
\centering
\begin{tikzpicture}
\tikzset{vertex/.style = {shape=circle,draw,inner sep=0pt,minimum size=1.9em}}
\tikzset{nnode/.style = {shape=circle,fill=myg,draw,inner sep=0pt,minimum
size=1.9em}}
\tikzset{edge/.style = {->,> = stealth}}
\tikzset{dedge/.style = {densely dotted,->,> = stealth}}
\tikzset{ddedge/.style = {dashed,->,> = stealth}}

\node[vertex] (S1) {$S$};

\node[shape=coordinate,right=\mynodespace of S1] (K) {};

\node[nnode,above=0.8\mynodespace of K] (V1) {$V_1$};
\node[nnode,below=0.8\mynodespace of K] (V2) {$V_2$};

\node[vertex,right=2\mynodespace of S1] (T) {$T$};

\draw[draw= none,bend left=5] (S1)  to node[sloped,fill=white, inner sep=3pt]{\small $\vdots$} (V1);
\draw[ddedge,bend left=25] (S1) to  node[fill=white, inner sep=5pt]{\small $e_1$} (V1);
\draw[ddedge,bend right=25] (S1) to  node[fill=white, inner sep=5pt]{\small $e_t$} (V1);

\draw[draw= none,bend left=5] (S1)  to node[sloped,fill=white, inner sep=5pt]{$\vdots$} (V2);
\draw[ddedge,bend left=25] (S1) to  node[fill=white, inner sep=5pt]{\small $e_{t+1}$} (V2);
\draw[ddedge,bend right=25] (S1) to  node[fill=white, inner sep=5pt]{\small $e_{2t+1}$} (V2);

\draw[edge,bend right=0] (V1) to  node[fill=white, inner sep=5pt]{\small $e_{2t+2}$} (T);

\draw[edge,bend right=0] (V2) to  node[fill=white, inner sep=5pt]{\small $e_{2t+3}$} (T);
\end{tikzpicture} 

\caption{Family E}\label{fig:familye}
\end{figure}

\begin{lemma}
\label{lemma-1}
Let $m$ be a positive integer, $t=2m$, and 
$a'=\left\lfloor |A|/(m+1) \right\rfloor$.
If $C$ is unambiguous for $\Omega[\mathfrak{E}_t,\mA,\mF,S \to T,\mU_S,t]$, then $|C| \le a'$.
\end{lemma}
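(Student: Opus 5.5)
The plan is to reduce unambiguity to a combinatorial condition on the single function $\mF_{V_2}\colon \mA^{2m+1}\to\mA$, and then count. Write a codeword as $c=(c_1,c_2)$ with $c_1\in\mA^{2m}$ on the edges into $V_1$ and $c_2\in\mA^{2m+1}$ on the edges into $V_2$. The terminal sees $(\mF_{V_1}(y_1),\mF_{V_2}(y_2))$, where $(y_1,y_2)$ is the word after the attack and $\dH(c_1,y_1)+\dH(c_2,y_2)\le 2m$.

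\textbf{Step 1: the key observation.} Take two distinct codewords $c,c'$ and any $j\in\{0,\dots,2m\}$. Since $\dH(c_1,c'_1)\le 2m$, there is a word $y_1\in\mA^{2m}$ with $\dH(c_1,y_1)\le j$ and $\dH(c'_1,y_1)\le 2m-j$; it is obtained by walking from $c_1$ towards $c'_1$. The adversary acting on $c$ spends $j$ errors to reach $y_1$, and the adversary acting on $c'$ spends $2m-j$ errors to reach $y_1$. The first received symbol then coincides. This leaves budgets $2m-j$ and $j$ on the $V_2$ side. Set
\[
B_c(k)=\{\mF_{V_2}(z) \st \dH(z,c_2)\le k\}.
\]
Unambiguity forces $B_c(2m-j)\cap B_{c'}(j)=\emptyset$ for all $j\in\{0,\dots,2m\}$. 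In particular $B_c(m)\cap B_{c'}(m)=\emptyset$ for $c\ne c'$.

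\textbf{Step 2: lower bound $|B_c(m)|\ge m+1$ when $|C|\ge 2$.} Fix $c'\ne c$. The condition with $j=2m$ (or $j=0$) and $c_2=c'_2$ would give a collision, so in fact $\dH(c_2,c'_2)=2m+1$. Take a Hamming path $z_0=c_2,z_1,\dots,z_{2m+1}=c'_2$, changing one coordinate at a time.

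I claim the values $\mF_{V_2}(z_i)$ are pairwise distinct. Suppose $\mF_{V_2}(z_i)=\mF_{V_2}(z_{i'})$ with $i<i'$. Then:
\begin{itemize}
\item this common value lies in $B_c(i)\subseteq B_c(i'-1)$;
\item it also lies in $B_{c'}(2m+1-i')$.
\end{itemize}
Choosing $j=2m+1-i'\in\{0,\dots,2m\}$ gives $2m-j=i'-1$, which contradicts Step 1.

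Hence $\mF_{V_2}(z_0),\dots,\mF_{V_2}(z_m)$ are $m+1$ distinct elements of $B_c(m)$.

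\textbf{Step 3: counting.} The sets $B_c(m)$, $c\in C$, are pairwise disjoint subsets of $\mA$, and each has size at least $m+1$. Therefore $|C|(m+1)\le|\mA|$, and so $|C|\le a'$.

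The case $|C|=1$ is degenerate. There the bound holds as soon as $|\mA|\ge m+1$, and I would treat it separately or note that it is implicitly excluded.

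I expect the main obstacle to be Step 2. Disjointness of the balls alone does not give a size bound, and the point is to use the flexible split of the budget, i.e.\ all $j$ and not just $j=m$, along a single path.
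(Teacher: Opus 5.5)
Your proof is correct. Its core is the same as the paper's: the adversary splits its budget of $2m$ errors so that two different codewords reach a common input of $V_1$, and this forces $\mF_{V_2}$ to reserve at least $m+1$ output symbols per codeword. The execution differs, though.

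The paper first asserts, without justification, that a largest unambiguous code may be taken to be a repetition code. It then argues with ``output states'' $(a,r)$ of $V_2$: it separates $(a,i)$ from $(b,j)$, and $(a,i)$ from $(a,j)$, using explicit two-symbol inputs $c_1\mid c_2$ and $c_1\mid c_2'$, and finishes with a count over a sub-alphabet.

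You instead work with an arbitrary code.
\begin{itemize}
\item The single condition $B_c(2m-j)\cap B_{c'}(j)=\emptyset$ for all $j$ packages all of the paper's adversarial configurations at once.
\item The case $j=2m$ shows that $c_2$ and $c_2'$ differ in every coordinate. This is all the count needs from the repetition-code picture; nothing about $c_1$ is used.
\item When $c=(a,\dots,a)$ and $c'=(b,\dots,b)$, your Hamming path from $c_2$ to $c_2'$ is exactly the paper's family of inputs with $2m+1-k$ copies of $a$ and $k$ copies of $b$.
\end{itemize}
So your route is more general, and it avoids both the unproved reduction and the paper's informal bookkeeping of output states. The paper's route has one merit: it names the reserved values explicitly as (symbol, multiplicity) pairs, which matches the encoding used in the achievability proof of Lemma~\ref{lemma2}.

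Two small points.
\begin{itemize}
\item In Step~2, the phrase ``and $c_2=c_2'$'' should read ``taking $z=c_2$''. The argument is: if $\dH(c_2,c_2')\le 2m$, then $\mF_{V_2}(c_2)\in B_c(0)\cap B_{c'}(2m)$, contradicting Step~1 with $j=2m$.
\item Your caveat about $|C|=1$ is a genuine defect of the statement, not of your argument. For $m\ge 2$ and $|\mA|\le m$ one has $a'=0$, while every one-word code is vacuously unambiguous. So the lemma needs $|C|\ge 2$ or $|\mA|\ge m+1$, and the paper's proof does not address this case.
\end{itemize}
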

\begin{proof}
Without loss of generality, an unambiguous code $\mC$ of the largest size can be assumed to be a repetition code. Let $\mathcal{A}'\subseteq\mathcal{A}$ be an arbitrary subset of cardinality $a'$. Our strategy is to lower bound the necessary output states of $V_2$, where $V_2$ has output state $(c,r)$ if $V_2$ receives an input that contains the symbol $c \in \mA'$ exactly~$r$ times. 

Let $a$ and $b$ be two arbitrary different elements of $\mathcal{A}'$. Consider a vector $c_2$ of length $2m+1$ such that $x$ of its coordinates are equal to $a$, and the remaining $(2m+1-x)$ coordinates are equal to $b$. Similarly, consider a vector $c_2'$ of length $2m+1$ such that $2m+1-y$ of its coordinates are equal to $a$, and the remaining $y$ coordinates are equal to $b$. Consider both vectors as input for $V_2$. 

Let $c_1$ be a vector of length $2m$ such that $z$ of its coordinates are equal to $a$ and the remaining $2m-z$ coordinates are equal to $b$. Consider $c_1$ as an input for 
  $V_1$. If 
  \begin{equation*}
    x+z\ge 2m+1 \quad\text{and}\quad y+2m-z \ge 2m+1, 
  \end{equation*}
  then the adversary can modify the codewords $(a, \dots, a)$ and $(b,\dots, b)$ to $c_1\mid c_2$ and $c_1 \mid c_2'$. Thus, $V_2$ cannot map $c_2$ and $c_2'$ to the same output states.
Next, we will show that $V_2$ must have different output states $(a,r)$ and $(b,r)$ for all $$\left\lceil\tfrac{t}{2}\right\rceil+1\le r\le 2m+1.$$ Choosing $x=i$, $y=j$, and $z=y-1$ we can distinguish output states $(a,i)$ and $(b,j)$ for all~$m+1\le i,j\le 2m+1$, since $x+z=i+j-1\ge 2m+1$. Choosing $x=i$, $y=t+1-j$, and~$z=y-1$ we can distinguish the output states $(a,i)$ and $(a,j)$ for all $m+1\le j<i\le t+1$, since $x+z=i+t+1-j-1\ge 2m+1$. Using the fact that the symbols $a$ and $b$ are arbitrary in $\mathcal{A}'$, 
  we conclude that $V_2$ has at least $\lvert \mathcal{A}' \rvert \cdot((2m+1)-(m+1)+1)) =\lvert \mathcal{A}'\rvert \cdot(m+1)$ different output states if $C$ is unambiguous. Therefore, we must have 
$$\lvert \mathcal{A}'\rvert \cdot(m+1) \le |\mA|^{\degout(V_2)}=|\mA|.$$ Since $\mC$ has size $a'$, this concludes the proof.
\end{proof}

\begin{lemma}
  \label{lemma0}
Let $m$ be a nonnegative integer, $t=2m+1$ and $a'\coloneqq\left\lfloor (|\mathcal{A}|-1)/(m+1)\right\rfloor$. If $C$ is unambiguous for $\Omega[\mathfrak{E}_t,\mA,\mF,S \to T,\mU_S,t]$, then $|C| \le a'$.
\end{lemma}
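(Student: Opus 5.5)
We follow the same strategy as the proof of Lemma~\ref{lemma-1}, adapted to the odd case $t=2m+1$. Now $V_1$ has $2m+1$ incoming edges and $V_2$ has $2m+2$, both vulnerable, with adversarial power $2m+1$.

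\emph{Reduction to repetition codes.} As before, we first argue that an unambiguous code of maximum size can be taken to be a repetition code, so $\mC=\{(c,\dots,c) \st c \in \mA'\}$ with $|\mA'|=|\mC|$. It then suffices to show $|\mA'|(m+1)+1 \le |\mA|$, since this gives $|\mC| \le \lfloor (|\mA|-1)/(m+1)\rfloor = a'$.

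\emph{Confusable pairs at $V_2$.} Fix distinct $a,b \in \mA'$. Take inputs $c_2,c_2'$ to $V_2$ of length $2m+2$ containing $x$ and $2m+2-y$ copies of $a$, the remaining coordinates being $b$. Take an input $c_1$ to $V_1$ of length $2m+1$ with $z$ copies of $a$ and $2m+1-z$ copies of $b$.
\begin{itemize}
\item The word $c_1\mid c_2$ is reachable from $(a,\dots,a)$ if $(2m+1-z)+(2m+2-x)\le 2m+1$, that is, $x+z\ge 2m+2$.
\item The word $c_1\mid c_2'$ is reachable from $(b,\dots,b)$ if $y+2m+1-z \ge 2m+2$.
\end{itemize}
When both conditions hold, unambiguity forces $\mF_{V_2}(c_2)\neq\mF_{V_2}(c_2')$.

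\emph{Counting output states.} Define states $(c,r)$ for $c\in\mA'$ and $m+1\le r\le 2m+2$, meaning ``$c$ appears $r$ times''. We separate them as follows.
\begin{itemize}
\item For $(a,i)$ versus $(b,j)$, choose $x=i$, $y=j$, $z=j-1$. The conditions reduce to $i+j\ge 2m+3$.
\item For $(a,i)$ versus $(a,j)$ with $j<i$, reuse the substitution $y=t+2-j$ (the complementary count).
\end{itemize}
The borderline case $i=j=m+1$ fails the first condition, so the naive count gives only $|\mA'|(m+1)$ distinct states plus ambiguity at the value $m+1$. This is where the ``$-1$'' must come from, and it is the main obstacle.

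\emph{Extracting the extra state.} The plan is to show that, among the states $(c,m+1)$ over all $c\in\mA'$, at most one can share an output with the others. Equivalently, the family of all states $(c,r)$ with $r\ge m+2$ ($|\mA'|(m+1)$ values once we also count $r=2m+2$, since the range $m+2,\dots,2m+2$ has $m+1$ values) requires outputs distinct from one another. In addition, at least one further output is needed.

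To get it, consider a word with exactly $m+1$ copies of $a$ and $m+1$ copies of $b$. With a suitable $c_1$ this word is confusable with high-count states of both symbols: take $z=m+1$ against $(b,j)$ for $j \ge m+2$, and $z=m$ against $(a,i)$ for $i\ge m+2$. Its output must therefore differ from every $(a,i)$ and every $(b,j)$ with $i,j\ge m+2$. This gives $|\mA'|(m+1)+1\le |\mA|$, as required.

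I expect that checking the parity bookkeeping, namely that exactly the $m+1$ values $r=m+2,\dots,2m+2$ are pairwise separable, is routine, and that the delicate step is the extra balanced word.
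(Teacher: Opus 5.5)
Your proposal is correct and follows the paper's proof. The Lemma~\ref{lemma-1}-type separation gives $|\mA'|(m+1)$ pairwise distinct outputs of $V_2$ for the states with $m+2\le r\le 2m+2$. The balanced $V_2$-input with $m+1$ copies each of $a$ and $b$ then forces one more output, using a $V_1$-input with $m$ copies of $a$ against $(a,i)$ and $m+1$ copies against $(b,j)$ (the latter is the paper's ``interchange $a$ and $b$'').

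Two small repairs are needed. First, the complementary substitution should read $y=2m+2-j$, not $t+2-j$. Second, when $|\mA'|\ge 3$ you (like the paper) must also separate the balanced word from the states $(c,r)$ with $c\notin\{a,b\}$. This works the same way with a $V_1$-input of $m+1$ copies of $a$ and $m$ copies of $c$: it is reachable together with the respective $V_2$-inputs from both $(a,\dots,a)$ and $(c,\dots,c)$ using at most $2m+1$ errors.
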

\begin{proof}
We follow the notation of the proof of Lemma \ref{lemma-1}. 
 Let $\mathcal{A}'\subseteq\mathcal{A}$ be an arbitrary subset of cardinality $a'$.
Let $a,b$ be arbitrary, distinct elements of $\mathcal{A}'$. 
Consider a vector $c_1$ of length~$2m+1$ such that $m$ of its coordinates are equal to $a$, and the remaining $m+1$ coordinates are equal to $b$. Similarly, consider a vector $c_2$ of length $2m+2$ such that $m+1$ of its coordinates are equal to $a$, and the remaining $m+1$ coordinates are equal to $b$. 

Note that the adversary can modify the codeword $(b,\dots,b)$ to the input $c_1$ for $V_1$ ($m$ modified edges) and the input $c_2$ for $V_2$ ($m+1$ modified edges). Then, the output state of $V_1$ is $(b,m+1)$ and the output state of $V_2$ is not characterized in the proof of Lemma~\ref{lemma-1}. To complete the proof, it suffices to show that $V_2$ cannot have an output state $(c,r)$ for any $c\in\mathcal{A'}$ and any $m+2\le r\le 2m+2$.

Assume that the output state of $V_2$ for input $c_2$ is $(a,i)$, where $i\ge m+2$. First we show that the output states $(b,m+1)$ for $V_1$ and $(a,i)$ for $V_2$ need to be decoded to the codeword~$(a,\dots,a)$. To this end, consider a vector $c_1'$ of length $2m+1$ such that $m$ of its coordinates are equal to~$a$, and the remaining $m+1$ coordinates are equal to $b$. Similarly, consider a vector $c_2'$ of length~$2m+2$ such that $i$ of its coordinates are equal to $a$, and the remaining $2m+2-i$ coordinates are equal to $b$. Now observe that the adversary can modify the codeword $(a,\dots,a)$ to the input $c_1'$ for $V_1$ ($m+1$ modified edges) and the input $c_2'$ for $V_2$ ($(2m+2-i)$ modified edges), where $m+1+(2m+2-i)\le 2m+1$ since $i\ge m+2$. Since the output state of $V_2$ is $(a,i)$ for $c_2'$ and the output state of $V_1$ is $(b,m+1)$, the decoder needs to map these pair of output states to $(a,\dots,a)$. However, the vectors $c_1$ for $V_1$ and $c_2$ for $V_2$ have the same outputs, by assumption, while the sent codeword is $(b,\dots,b)$. This is a contradiction as $a \neq b$. 

Finally, assume that the output state of $V_2$ for input $c_2$ is $(b,i)$, where $i\ge m+2$. Here we can interchange the roles of the symbols $a$ and $b$ in the previous argument, proving the result. 
\end{proof}

\begin{lemma}
\label{lemma1}
Let $m$ be a nonnegative integer and $t=2m+1$. We have $$\CC_1(\mathfrak{E}_t,\mA,\mU_S,t) \ge \log_{|\mA|}\left(\left\lfloor\tfrac{|\mathcal{A}|-1}{m+1}\right\rfloor\right).$$
\end{lemma}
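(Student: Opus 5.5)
The plan is to construct an explicit network code together with a repetition outer code that attains the bound of Lemma~\ref{lemma0}, mirroring the output states used in its proof. Let $q=|\mA|$ and $a'=\lfloor (q-1)/(m+1)\rfloor$, and fix a subset $\mA'\subseteq\mA$ with $|\mA'|=a'$. The outer code is the repetition code $\mC=\{(c,\dots,c)\st c\in\mA'\}\subseteq\mA^{4m+3}$. Recall that $V_1$ has $2m+1$ incoming edges and $V_2$ has $2m+2$, each with a single outgoing edge.

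\textbf{The inner code.} $V_1$ outputs a state $(d,s)$ whenever some $d\in\mA'$ occurs exactly $s\ge m+1$ times among its inputs. There are $(m+1)a'\le q-1$ such states, so they can be injectively encoded into $\mA$, and every other input goes to a spare symbol $\ast$. Note that a symbol occurring at least $m+1$ times out of $2m+1$ is unique. Similarly, $V_2$ outputs $(c,r)$ whenever some $c\in\mA'$ occurs exactly $r\ge m+2$ times among its $2m+2$ inputs; this is again unique and gives $(m+1)a'\le q-1$ states. All remaining inputs go to a spare symbol $\ast$. This is exactly where the $-1$ in $a'$ is spent: one symbol is reserved for ``no strong majority at $V_2$''.

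\textbf{The decoder.} Let $k_1,k_2$ be the numbers of corrupted edges into $V_1,V_2$, with $k_1+k_2\le 2m+1$, and let $x$ be the sent symbol. The decoder has four rules.
\begin{enumerate}
\item If $V_2$ outputs $\ast$, then fewer than $m+2$ of its inputs are correct, so $k_2\ge m+1$ and $k_1\le m$. Hence $V_1$ sees $x$ at least $m+1$ times and outputs $(x,\cdot)$, and the decoder reads $x$ from $V_1$.
\item If $V_1$ outputs $\ast$, then $k_1\ge m+1$, so $k_2\le m$. Hence $V_2$ outputs $(x,\ge m+2)$, and the decoder reads $x$ from $V_2$.
\item If both nodes output states $(d,s)$ and $(c,r)$ with $d=c$, the decoder outputs $c$. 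This is correct: at most one of the two symbols can be wrong, since both being wrong would require $k_1\ge m+1$ and $k_2\ge m+2$.
\item If $d\ne c$, the decoder outputs $d$ when $s\ge r$ and $c$ when $s<r$.
\end{enumerate}

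\textbf{The main obstacle} is justifying the last rule, i.e.\ that the two failure modes are separated by comparing $s$ with $r$. Suppose first $x=c$, so $V_1$ is fooled. Then $s\le k_1$, and $r=2m+2-k_2\ge 2m+2-(2m+1-k_1)=k_1+1>s$, so the decoder outputs $c$. Suppose instead $x=d$, so $V_2$ is fooled. Then $k_2\ge r$, so $k_1\le 2m+1-r$ and $s\ge 2m+1-k_1\ge r$, so the decoder outputs $d$. The two cases are therefore disjoint and the rule is well defined.

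It follows that the decoder recovers $x$ in every case, so $\mC$ is unambiguous, and the capacity is at least $\log_{|\mA|}a'$. If $a'=0$, i.e.\ $q-1<m+1$, the bound is vacuous and nothing needs to be constructed.
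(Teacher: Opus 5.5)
Your proposal is correct and takes essentially the same approach as the paper: a repetition code over a sub-alphabet of size $\lfloor(|\mA|-1)/(m+1)\rfloor$, nodes that output the majority symbol with its multiplicity ($\ge m+1$ at $V_1$, $\ge m+2$ at $V_2$) plus one reserved symbol, and a terminal that trusts the node reporting the larger multiplicity, with ties going to $V_1$. Your check of the comparison rule, via $s\le k_1<k_1+1\le r$ and $s\ge 2m+1-k_1\ge r$, is actually cleaner than the paper's case split, which states intermediate bounds such as $\alpha\le m+1$ that do not hold in general, even though its conclusions are true.
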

\begin{proof}
Let $\mathcal{A}'\subseteq\mathcal{A}$ be an arbitrary subset of the alphabet of cardinality $a'\coloneqq\left\lfloor(|\mathcal{A}|-1)/(m+1)\right\rfloor$.
As outer code, we choose a $(2t+1)$-fold repetition code over the sub-alphabet $\mathcal{A}'$, i.e., $$\mC\coloneqq\left\{(c,\dots,c)\in\mathcal{A}^{2t+1}\,\mid\,c\in\mathcal{A}'\right\},$$ of cardinality $a'$.

For the function of $V_2$, we inject
the set of output states of $V_2$ into $\mathcal{A}$. There are $2m+2$ incoming edges of $V_2$. To have a unique majority winner that occurs more than half of those edges, call it $x$, the symbol $x$ must occur on at least $m+2$ of those edges. We have~$(2m+2)-(m+2)+1=m+1$. Thus, the injection we are looking for is of the form
$$\{!\} \cup (\mA'\times S) \to \mA,$$
where $|S|=m+1$ and ``!'' is one of the alphabet symbols. That implies that the size of the code is $$\left\lfloor\frac{\lvert \mathcal{A} \rvert-1}{\lvert S \rvert}\right\rfloor,$$ as desired. 

The decoding works as follows:
\begin{enumerate}
    \item The output state of $V_1$ is $!$ and the output state of $V_2$ is $(y,\beta)$. In this case, the answer is $(y,\ldots,y)$;
    \item The output state of $V_1$ is $(x,\alpha)$ and the output state of $V_2$ is $!$. In this case, the answer is $(x,\ldots,x)$;
    \item The output state of $V_1$ is $(x,\alpha)$ and the output state of $V_2$ is $(y,\beta)$. Here, we check whether $\alpha \ge \beta$. If so, we decode it to $(x,\ldots,x)$. Otherwise, we decode to $(y,\ldots,y)$.
\end{enumerate} 
For the analysis, assume that the $a$ incoming edges of $V_1$ and the $b$ incoming edges of $V_2$ are attacked, so that $a+b\le 2m+1$.
If $a\ge m+1$, then $V_2$ outputs $(y,\beta)$ for some $\beta > m+1$, and thus $y$ coincides with the coordinate of the originally sent codeword. So, if $V_1$ outputs $!$, then the decoding is correct. If $V_1$ outputs $(x,\alpha)$, then we can easily check $\alpha\le m+1 <\beta$ and the decoding is also correct.

Similarly, if $b\ge m+1$, then $V_1$ outputs $(x,\alpha)$ for some $\alpha \ge m+1$, and $x$ coincides with the coordinate of the originally sent codeword. So, if~$V_2$ outputs $!$, then the decoding is correct. If~$V_2$ outputs $(y,\beta)$, then one can easily check $\beta\le m+1 \le \alpha$ and the decoding is also correct. Note that we cannot have  
both $V_1$ and $V_2$ output $!$.

Finally, if $a \le m$ and $b<m+1$, then $V_1$ outputs $(x,\alpha)$ and $V_2$ outputs $(y,\beta)$, where we have~$\alpha,\beta>m+1$. In this case, we have $x=y$, which is the original sent codeword, so that the choice of the decoder does not matter.
\end{proof}

\begin{lemma}
\label{lemma2}
Let $m$ be a positive integer and $t=2m $. We have $$\CC_1(\mathfrak{E}_t,\mA,\mU_S,t) \ge \log_{|\mA|}\left(\left\lfloor\tfrac{|\mathcal{A}|}{m+1}\right\rfloor\right).$$
\end{lemma}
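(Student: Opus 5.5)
We mirror the construction in the proof of Lemma~\ref{lemma1}, but exploit that for $t=2m$ the node $V_2$ has an odd number $2m+1$ of incoming edges. Because of this, no separate ``!'' symbol is needed, which gains exactly the $-1$ in the numerator. Let $\mA'\subseteq\mA$ have cardinality $a'=\lfloor |\mA|/(m+1)\rfloor$ and take as outer code the $(2t+1)$-fold repetition code over $\mA'$, of size $a'$. Both intermediate nodes output a state $(x,\alpha)$ with $x\in\mA'$ and $\alpha$ in a window of $m+1$ consecutive integers. This gives $a'(m+1)\le|\mA|$ states per node, which can be injected into $\mA=\mA^{\degout(V_i)}$.

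\emph{Node $V_2$} ($2m+1$ inputs). If some $y\in\mA'$ occurs $\beta\ge m+1$ times, this $y$ is unique, and $V_2$ outputs $(y,\beta)$; here $\beta\in\{m+1,\dots,2m+1\}$. Otherwise $V_2$ outputs an arbitrary fixed state with second coordinate $m+1$.

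\emph{Node $V_1$} ($2m$ inputs). If some $x\in\mA'$ occurs $\alpha\ge m$ times, $V_1$ outputs $(x,\alpha)$ for such an $x$ of maximal count, breaking ties arbitrarily; here $\alpha\in\{m,\dots,2m\}$. Otherwise it outputs an arbitrary fixed state with second coordinate $m$.

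\emph{Decoder.} On $(x,\alpha)$ from $V_1$ and $(y,\beta)$ from $V_2$, output $(x,\dots,x)$ if $\alpha\ge\beta$ and $(y,\dots,y)$ otherwise. Ties thus go to $V_1$.

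For the analysis, let $c$ be the sent symbol, and let $a$ and $b$ be the numbers of attacked incoming edges of $V_1$ and $V_2$, so $a+b\le 2m$. The one fact used repeatedly is that a symbol different from $c$ can occur at most $a$ times on $\inn(V_1)$ and at most $b$ times on $\inn(V_2)$.

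Case $b\le m$. Then $c$ occurs at least $2m+1-b\ge m+1$ times at $V_2$, so $y=c$ and $\beta\ge 2m+1-b\ge a+1$. If $x\ne c$, then $V_1$ either output the fallback state, with $\alpha=m<m+1\le\beta$, or a genuine state with $\alpha\le a<\beta$. In both subcases the decoder picks $y=c$. If $x=c$, the decoding is correct whichever node wins.

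Case $b\ge m+1$. Then $a\le m-1$, so $c$ occurs at least $2m-a\ge m+1$ times at $V_1$, while every other symbol occurs at most $a\le m-1$ times there. Hence $x=c$ and $\alpha\ge 2m-a$. If $y\neq c$, then either $V_2$ is in its fallback state, with $\beta=m+1\le 2m-a\le\alpha$, or $\beta\le b\le 2m-a\le\alpha$. In both subcases the tie rule selects $x=c$.

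In every case the terminal recovers $c$, so the code is unambiguous and the claimed bound follows. The main obstacle is the boundary equality $\alpha=\beta$ in the second case. It is what forces ties to be broken in favour of $V_1$, so one must check that in the first case a tie can only occur when $x=c$, which holds because there $\beta\ge a+1>\alpha$ whenever $x\ne c$. One must also check that the fallback states of both nodes never beat the correct node; this is why their counters are fixed at the bottom of the respective windows.
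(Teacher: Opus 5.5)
Your proof is correct. Its skeleton is the paper's: a repetition code over a sub-alphabet $\mA'$ of size $\lfloor|\mA|/(m+1)\rfloor$, each intermediate node forwarding a (symbol, count) state from a set of $|\mA'|(m+1)$ states, and a terminal that compares the two counts with ties broken in favour of $V_1$.

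The one place you deviate is the counter window of $V_2$, and there your choice is the right one. The paper uses $\{m,\dots,2m\}$ for $V_2$: the value $m$ signals ``no majority'', and a count of $2m+1$ is reported as $2m$. This cap makes the paper's claim $\beta\ge\alpha+1$ in case 2(a) impossible when $\alpha=2m$. In fact, for $a'\ge 2$ it produces a collision between two codewords $c\neq c'$. Sending $c$ and overwriting all $2m$ inputs of $V_1$ by $c'$ delivers $\bigl((c',2m),(c,2m)\bigr)$ to $T$. Sending $c'$ and overwriting $2m$ of the inputs of $V_2$ by $c$ delivers the same pair.

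Your window $\{m+1,\dots,2m+1\}$ keeps the exact count $2m+1$. That is precisely what your case $b\le m$ needs when $a=2m$, since it gives $\beta=2m+1>2m\ge\alpha$. You instead merge the no-majority state into the bottom value $m+1$. This is harmless: no majority at $V_2$ forces $b\ge m+1$, hence $\alpha\ge 2m-a\ge m+1=\beta$, and the tie rule then selects the correct node $V_1$.

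So your argument establishes the lemma and also repairs an off-by-one error in the paper's construction.
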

\begin{proof}
Let $a' = \left\lfloor |\mA|/ (m+1)\right\rfloor$ and label the codewords of $\mC$ as $c_1,\ldots,c_{a'}$, where $c_i = (i,\ldots,i)$ for all~$i \in \{1,\ldots,a'\}$. 
For $\beta \in \{m+1,\ldots,2m\}$ and $1 \le i \le a'$, the function of $V_2$ outputs~$(j,\beta)$, where the symbol $j$ occurs $\beta$ times on the incoming edges of $V_2$. If there is no symbol occurring more than $m+1$ times, the function of $V_2$ outputs $(j,m)$, where $j$ is any symbol of the alphabet. If there is a symbol, say $j$, occurring $2m+1$ times on the incoming edges of $V_2$, then~$V_2$ outputs~$(j,2m)$. Similarly $V_1$ sends $(i,\alpha)$, where $\alpha \in \{m+1,\ldots,2m\}$ and the symbol~$i$ occurs $\alpha$ times in the incoming edges of $V_1$. For the remaining possibilities, $V_1$ sends $(i,m)$, where $i$ is the most occurred symbol on the incoming edges of $V_1$. If it is not unique, $V_1$ picks the smallest one. 

We claim that the code $C$ with the described functions for the intermediate vertices $V_1$ and~$V_2$ creates an unambiguous pair and thus gives the desired lower bound.
Note that one needs different alphabet symbols for each element of the set $$\{(i,r): 1\le i \le a',\ m \le r \le 2m\}.$$ This is guaranteed by the assumption $a' = \left\lfloor \frac{|\mA|}{m+1}\right\rfloor$.

The decoding works as follows. Suppose that $T$ gets $(i,\alpha)$ from $V_1$ and $(j,\beta)$ from $V_2$. If~$\alpha \ge \beta$, the terminal decodes to $c_i$. Otherwise, $T$ decodes to $c_j$. For the rest of the proof, we explain why this decoding works.
\begin{enumerate}
    \item Assume that $\alpha=m$. That means there does not exist a symbol that occurs at least~$m+1$ times in the incoming edges of $V_1$. Thus, at least $m$ incoming edges of $V_1$ were attacked. Therefore at most $m$ of the incoming edges of $V_2$ were attacked, and thus the majority of the incoming edges of $V_2$ is maintained and $c_j$ should be decoded. Since $\beta \ge m+1 > m$, the decoding works correctly. 
    \item Let $\alpha \neq m$ and suppose that the terminal gets $(i,\alpha)$ and $(j,\beta)$. \begin{enumerate}
        \item One possibility is at least $\alpha$ of the incoming edges of $V_1$ were attacked and $i$ occurs~$\alpha$ times on the incoming edges of $V_1$. Thus, at most $2m-\alpha$ of the incoming edges of $V_2$ can be attacked. This number is always smaller than half of the number of incoming edges of $V_2$ since $\alpha \ge m+1$. Thus, $c_j$ should be decoded, and that is indeed the case because $\beta \ge (2m+1)-(2m-\alpha)=\alpha+1 > \alpha$.
        \item Assume that $x\le m$ edges among the incoming edges of $V_1$ were attacked. So,~$c_i$ should be decoded.  
        let~$y$ be the number of incoming edges of $V_2$ that were attacked and note that
        $y \le 2m-x$. If $\beta = y$ and~$y > 2m-x$, then $x+y > 2m=t$, a contradiction. In all the other cases where~$\beta =y$, the codeword $c_i$ is decoded by the decoding algorithm. The only case left to check 
        is $\beta = 2m+1-y$. We have~$\alpha = 2m-x < \beta$ if and only if $x \ge y$. However, in this case $j=i$, and thus we are done.
    \end{enumerate}
\end{enumerate}

Thus, we have shown that $\CC_1(\mathfrak{E}_{2m},\mA,\mU_S,2m) \ge \log_{|\mA|}\left(\left\lfloor\tfrac{|\mathcal{A}|}{m+1}\right\rfloor\right).$
\end{proof}

We can finally combine the four previous lemmas to compute the capacity of the networks of Family E.

\begin{theorem}
\label{thm:familyE}
Let $m$ be a nonnegative integer and $t=2m + \alpha$, where $\alpha \in \{0,1\}$. We have
$$\CC_1(\mathfrak{E}_t,\mA,\mU_S,t) =  \log_{|\mA|}\left(\left\lfloor\tfrac{|\mA|-\alpha}{m+1}\right\rfloor\right).$$
\end{theorem}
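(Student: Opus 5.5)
The plan is to obtain Theorem~\ref{thm:familyE} by matching upper and lower bounds, treating the two parities of $t$ separately. The parameter $\alpha\in\{0,1\}$ records the parity. For even $t=2m$ (so $\alpha=0$) the claimed value is $\log_{|\mA|}\lfloor |\mA|/(m+1)\rfloor$. For odd $t=2m+1$ (so $\alpha=1$) it is $\log_{|\mA|}\lfloor (|\mA|-1)/(m+1)\rfloor$. In each case we combine one upper-bound lemma with one construction lemma.

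\emph{Odd case, $t=2m+1$ with $m\ge 0$.} Fix any network code $\mF$ for $(\mathfrak{E}_t,\mA)$ and any outer code $\mC$ that is unambiguous for $\Omega[\mathfrak{E}_t,\mA,\mF,S\to T,\mU_S,t]$. Lemma~\ref{lemma0} gives $|\mC|\le \lfloor(|\mA|-1)/(m+1)\rfloor$. Since $\mF$ and $\mC$ are arbitrary, Definition~\ref{def:capacities} yields the upper bound. Lemma~\ref{lemma1} supplies the matching lower bound. Together they give equality with $\alpha=1$.

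\emph{Even case, $t=2m$.} Here the statement allows $m\ge 0$, but Lemmas~\ref{lemma-1} and~\ref{lemma2} assume $m\ge 1$. For $m\ge 1$ we argue as in the odd case. Lemma~\ref{lemma-1} bounds every unambiguous code by $\lfloor |\mA|/(m+1)\rfloor$. Lemma~\ref{lemma2} gives a network code together with an outer code of exactly that size. Hence equality holds with $\alpha=0$.

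The case $m=0$, $t=0$, needs a separate check. Then there is no adversary, and $\mathfrak{E}_0=([0,1],[1,1])$ degenerates: $V_1$ has no incoming edges and the source reaches the terminal through the single edge $S\to V_2\to T$. Taking $\mF_{V_2}$ to be the identity and $\mC=\mA$ gives capacity $1$. This agrees with $\log_{|\mA|}\lfloor |\mA|/1\rfloor=1$, and the cut-set bound of Theorem~\ref{thm:sbound} gives the reverse inequality. If this degenerate network is considered outside the family (the family was defined for $t\ge 1$), the statement for $m=0$ reduces to the odd case $t=1$ alone.

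I expect no real obstacle, since the content sits in the four lemmas. The one point needing care is that the upper-bound lemmas hold for \emph{every} network code $\mF$, not just a fixed one; they are phrased for arbitrary $\mF$. This uniformity is what lets us pass from a bound on unambiguous codes to a bound on $\CC_1$, which maximizes over both $\mF$ and $\mC$. I would also check that the reduction to repetition codes used inside Lemma~\ref{lemma-1} holds, and, if needed, cite that it is justified as in \cite{beemer2023network}.
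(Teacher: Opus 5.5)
Your proposal is correct and is exactly the paper's proof, which pairs Lemma~\ref{lemma-1} with Lemma~\ref{lemma2} for even $t$ and Lemma~\ref{lemma0} with Lemma~\ref{lemma1} for odd $t$; your separate treatment of $t=0$ is a harmless addition, since $\mathfrak{E}_t$ is only defined for $t\ge 1$. One edge case is missed by both you and the paper: when $|\mA|<m+1+\alpha$ the floor is $0$ and the right-hand side becomes $\log_{|\mA|}0$, yet a single-codeword code is always unambiguous, so $\CC_1\ge 0$; the upper-bound lemmas only constrain codes with $|\mC|\ge 2$ (their proofs need two distinct symbols $a,b$), so the formula should be read as $\log_{|\mA|}\max\{1,\lfloor(|\mA|-\alpha)/(m+1)\rfloor\}$, or the theorem should assume $|\mA|\ge m+1+\alpha$.
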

\begin{proof}
The case where $t$ is even follows from Lemma \ref{lemma-1} and Lemma \ref{lemma2}, and the case where~$t$ is odd follows from Lemma \ref{lemma0} and \ref{lemma1}.   
\end{proof}
For example, when $t=1$, Theorem \ref{thm:familyE} gives the capacity of the Diamond Network; see Theorem \ref{thm:diamond}. We conclude the section with the next remark.

\begin{table}[h!]
\centering
\begin{tabular}{ |p{4cm}|p{2cm}|  }
\hline
\multicolumn{2}{|c|}{Largest size of an unambiguous outer code for Family E} \\
\hline \hline
$t=1$ & $\lvert \mA \rvert$-1 \\
\hline
$t=2$ & $\lvert \mA \rvert/2$ \\
\hline
$t=3$ & $(\lvert \mA \rvert-1)/2$ \\
\hline
$t=4$ & $\lvert \mA \rvert/3$ \\
\hline
$t=5$ & $(\lvert \mA \rvert-1)/3$ \\
\hline
\end{tabular}
\caption{Table for Remark \ref{rem:familE}.}
\label{tab:diam}
\end{table}

\begin{remark}
\label{rem:familE}
Theorem \ref{thm:familyE} computes the capacity of any member of Family E. Naturally, as~$t$ increases, the capacity approaches 0. 
Roughly speaking, the largest size of unambiguous outer codes for $t=2m$ and $t=2m+1$ is the same, and this number becomes $\frac{m}{m+1}$ when~$t=2m-1$ for~$m \ge 1$. Now, observe that $\frac{m}{m+1}$ attains its smallest value when $m=1$, showing the  discrepancy between the Diamond Network and the remaining members of Family E.
We refer to Table \ref{tab:diam} for the first few instances to highlight this phenomenon. 
\end{remark}

\section{A Generalized Family of Simple 2-Level Networks}
\label{sec:newfamily}

In this section, we define a new family of networks that contains various interesting families, including Family B. This family also showcases some interesting phenomena of network coding for specific parameters. Let $a,s \ge 1$ and $b \ge 0$ be integers. Define $$\mathfrak{S}_{a,b,s} = ([a,b+s],[a,s]);$$ see Figure \ref{fig:newfamily} for an illustration. 

\begin{figure}[htbp]
\centering

\begin{tikzpicture}
\tikzset{vertex/.style = {shape=circle,draw,inner sep=0pt,minimum size=1.9em}}
\tikzset{nnode/.style = {shape=circle,fill=myg,draw,inner sep=0pt,minimum
size=1.9em}}
\tikzset{edge/.style = {->,> = stealth}}
\tikzset{dedge/.style = {densely dotted,->,> = stealth}}
\tikzset{ddedge/.style = {dashed,->,> = stealth}}

\node[vertex] (S1) {$S$};

\node[shape=coordinate,right=\mynodespace of S1] (K) {};

\node[nnode,above=0.8\mynodespace of K] (V1) {$V_1$};
\node[nnode,below=0.8\mynodespace of K] (V2) {$V_2$};

\node[vertex,right=2\mynodespace of S1] (T) {$T$};

\draw[draw= none,bend left=5] (S1)  to node[sloped,fill=white, inner sep=3pt]{\small $\vdots$} (V1);
\draw[ddedge,bend left=25] (S1) to  node[fill=white, inner sep=5pt]{\small $e_1$} (V1);
\draw[ddedge,bend right=25] (S1) to  node[fill=white, inner sep=5pt]{\small $e_a$} (V1);

\draw[draw= none,bend left=5] (S1)  to node[sloped,fill=white, inner sep=5pt]{$\vdots$} (V2);
\draw[ddedge,bend left=25] (S1) to  node[fill=white, inner sep=5pt]{\small $e_{a+1}$} (V2);
\draw[ddedge,bend right=25] (S1) to  node[fill=white, inner sep=5pt]{\small $e_{a+s+b}$} (V2);

\draw[draw= none,bend left=5] (V1)  to node[sloped,fill=white, inner sep=5pt]{$\vdots$} (T);
\draw[edge,bend left=25] (V1) to  node[fill=white, inner sep=5pt]{\small $e_{a+s+b+1}$} (T);
\draw[edge,bend right=25] (V1) to  node[fill=white, inner sep=5pt]{\small $e_{2a+s+b}$} (T);

\draw[draw= none,bend left=5] (V2)  to node[sloped,fill=white, inner sep=5pt]{$\vdots$} (T);
\draw[edge,bend left=25] (V2) to  node[fill=white, inner sep=5pt]{\small $e_{2a+s+b+1}$} (T);
\draw[edge,bend right=25] (V2) to  node[fill=white, inner sep=5pt]{\small $e_{2a+2s+b}$} (T);

\end{tikzpicture} 

\caption{A network of the new family $\mathfrak{S}_{a,b,s}$. }\label{fig:newfamily}
\end{figure}
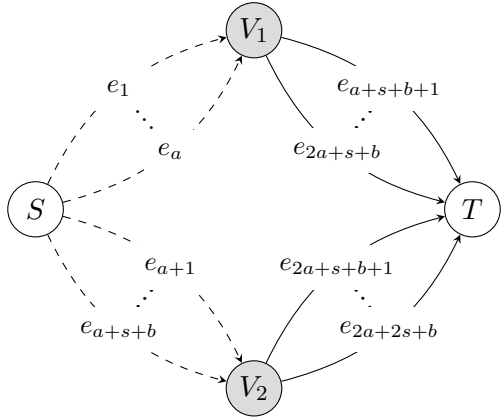

Theorem \ref{thm:sbound} reads 
\begin{equation}
\label{eq:thm19_reads}
\CC_1(\mathfrak{S}_{a,b,s},\mA,\mU_S,1) \le
\begin{cases}      
      a+s-2  & \text{if } b=0, \\
      s + \max\{0,a-2\}  & \text{otherwise}. \\
\end{cases}
\end{equation}

The network family specializes to 
other network families by choosing some specific parameters, as the next result illustrates.

\begin{proposition}
\label{prop:newfamily}
We have 
\begin{enumerate}
    \item \label{item1} $\mathfrak{S}_{a,b,s} = \mathfrak{B}_s$ when $a =b=1$.
    \item \label{item2} $\CC_1(\mathfrak{S}_{a,0,s},\mA,\mU_S,1) = A_{\mA}(a+s,3)$.
    \item \label{item3} $\mathfrak{S}_{a,b,s}$ coincides with the Diamond Network of Theorem \ref{thm:diamond} when $a=b=s=1$, and thus $\CC_1(\mathfrak{S}_{1,1,1},\mA,\mU_S,1) = \log_{|\mA|}(\lvert \mA \rvert-1).$ Note that for any triple $(1,b,1)$ with $b>1$, we have $\CC_1(\mathfrak{S}_{1,b,1},\mA,\mU_S,1)=1$.
    \item \label{item4} $\CC_1(\mathfrak{S}_{2,1,1},\mA,\mU_S,1) = 1$.
    \item \label{item5} $\CC_1(\mathfrak{S}_{3,b,1},\mA,\mU_S,1) = 
    1$ if $(b,\lvert \mA \rvert)=(1,2)$, and 
    $\CC_1(\mathfrak{S}_{3,b,1},\mA,\mU_S,1) =2$ otherwise.
\end{enumerate}
\end{proposition}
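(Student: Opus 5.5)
The proposition has five items, and I would treat each one separately. In every case the upper bound comes from Theorem~\ref{thm:sbound} or from \cite[Lemma~6.5]{beemer2023network}, and the lower bound from an explicit pair consisting of an outer code and a network code.

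\emph{Items~\ref{item1}--\ref{item4}.}
\begin{itemize}
\item Item~\ref{item1} is immediate from the definitions: $\mathfrak{S}_{1,1,s}=([1,s+1],[1,s])=\mathfrak{B}_s$.
\item For item~\ref{item2}, take $b=0$. Then $\degin(V_i)=\degout(V_i)$ for both intermediate nodes, so by \cite[Lemma~6.5]{beemer2023network}, applied to each node, we may assume both $\mF_{V_1}$ and $\mF_{V_2}$ are identities. The channel then becomes the identity channel on $\mA^{a+s}$ with at most one arbitrary symbol error. A code is unambiguous exactly when it has minimum Hamming distance at least $3$. Hence the capacity is $\log_{|\mA|}$ of the maximum size of such a code, which is what $A_{\mA}(a+s,3)$ denotes here.
\item For item~\ref{item3}, the case $a=b=s=1$ is Theorem~\ref{thm:diamond}. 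For $(1,b,1)$ with $b\ge 2$:
  \begin{itemize}
  \item Upper bound: apply Theorem~\ref{thm:sbound} to the cut consisting of $e_1$ and the outgoing edge of $V_2$. This gives $1+\max\{0,1-2\}=1$.
  \item Lower bound: use the repetition code over all of $\mA$ and let $V_2$ forward the majority of its $b+1\ge 3$ inputs. One error cannot destroy that majority, so $T$ simply reads the edge coming from $V_2$.
  \end{itemize}
\item For item~\ref{item4}, the same cut gives the upper bound $1$. For the lower bound:
  \begin{itemize}
  \item Use the $4$-fold repetition code over $\mA$, let $V_1$ be the identity, and let $V_2$ forward its first input.
  \item The terminal receives $(y_1,y_2)$ from $V_1$. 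If $y_1=y_2$, the single error is not on $\inn(V_1)$, so $T$ decodes $y_1$.
  \item If $y_1\neq y_2$, the error is on $\inn(V_1)$, so the inputs of $V_2$ are clean and $T$ decodes the symbol coming from $V_2$.
  \end{itemize}
\end{itemize}

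\emph{Item~\ref{item5}.} Theorem~\ref{thm:sbound} (the case $b\ge1$ of \eqref{eq:thm19_reads}) gives the upper bound $2$.
\begin{itemize}
\item \textbf{Construction.} Let $V_1$ be the identity. Send $(x,y,L(x,y))$ on $\inn(V_1)$, where $L$ is a Latin square on $\mA$. The set of such triples is a code of length $3$, size $|\mA|^2$ and minimum distance $2$. Send $x$, $y$ (repeated as needed) on $\inn(V_2)$, and let $V_2$ output $g(x,y)$ computed from its first two inputs, where $g$ is another function $\mA^2\to\mA$ still to be chosen.
\item \textbf{Decoding.} If the triple received through $V_1$ is a codeword, then no error occurred on $\inn(V_1)$, since one error is always detected; the terminal decodes that codeword. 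Otherwise the error lies on $\inn(V_1)$, the inputs of $V_2$ are clean, and $g(x,y)$ must single out the sent codeword among all codewords at distance $1$ from the received triple.
\item \textbf{Required property of $g$.} Any two such candidate codewords agree in exactly one coordinate. So it suffices that $g$ is injective on every row, every column, and every level set of $L$. Equivalently, $g$ must be a Latin square orthogonal to $L$.
\item \textbf{Availability.} A pair of orthogonal Latin squares of order $|\mA|$ exists for every $|\mA|\notin\{2,6\}$. For fields one can take $L=x+y$ and $g=x+\omega y$ with $\omega\notin\{0,1\}$. This proves that the capacity is $2$ for all these alphabet sizes and every $b\ge 1$.
\end{itemize}

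\emph{Remaining cases, which I expect to be the main obstacle.} These are $|\mA|=6$ (any $b$) and $|\mA|=2$ with $b\ge 2$. Here $V_2$ has at least $3$ incoming edges, and I would try to use that redundancy.
\begin{itemize}
\item For $b\ge 2$, let $V_2$ receive a repetition-type encoding of $(x,y)$ and check it internally, so that it can detect, or correct by majority, an error on its own inputs.
\item Then $g$ only needs to distinguish candidate codewords in the situation where $V_1$ is the corrupted side. I would aim to relax the orthogonality requirement so that a suitable $g$ exists for orders $2$ and $6$ as well. If that fails, I would fall back on an explicit computer-found construction, in the spirit of the satisfiability approach used for Proposition~\ref{prop:familyB}.
\item I have no construction for $|\mA|=6$, $b=1$: there $V_2$ has only two inputs, so the redundancy idea does not apply directly. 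This case needs a separate construction or a direct search.
\end{itemize}

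\emph{The exceptional case $(b,|\mA|)=(1,2)$.} The lower bound $1$ comes from item~\ref{item4}-style repetition. For the upper bound I would show that no unambiguous binary code of size $3$ exists. There are only $2^{5}$ input words, and the network codes on $V_2$ are maps $\F_2^2\to\F_2$, with $V_1$ the identity by \cite[Lemma~6.5]{beemer2023network}. I would run a short case analysis on the pairwise Hamming distances of three codewords, or an exhaustive search.
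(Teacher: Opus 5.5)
Items 1--4, the Singleton upper bound in item 5, and your plan for the exceptional case $(b,|\mA|)=(1,2)$ (an exhaustive check over the $16$ maps at $V_2$) agree with the paper. Your orthogonal-Latin-square construction is the paper's $[4,2,3]$ MDS argument in another guise: a code of length $4$, size $|\mA|^2$ and distance $3$ is the same thing as a pair of orthogonal Latin squares.

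\textbf{Missing idea for $b\ge 2$.} The gap is in the rest of item 5. For $b\ge 2$ you try to salvage the coupled scheme, but no coupling is needed. Send $(x,x,x)$ on $\inn(V_1)$ and $(y,y,y)$ on three of the $b+1\ge 3$ inputs of $V_2$, with fixed symbols on the remaining inputs. Let $V_1$ be the identity and let $V_2$ output the majority of those three inputs. Since only one source edge is corrupted, the terminal recovers $x$ by majority from $V_1$'s three outputs and receives the correct $y$ from $V_2$. This gives $|\mA|^2$ codewords over every alphabet, and it is exactly what the paper does. So $|\mA|\in\{2,6\}$ with $b\ge 2$ are not obstacles at all, yet your proposal leaves them unproved.

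\textbf{The case $|\mA|=6$, $b=1$.} Your caution here is justified. The paper disposes of $b=1$, $|\mA|\ge 3$ by asserting that a $[4,2,3]$ MDS code exists over $\mA$. By the equivalence above, this fails for $|\mA|=6$. So this case really needs a nonlinear $V_2$, and your proposal does not supply one. Here is one construction that works for every $|\mA|=q\ge 3$.
\begin{itemize}
\item Identify $\mA$ with $\Z_q$. Pick distinct pairs $(a_k,b_k)$ for $1\le k\le q-1$ such that $0,a_k,b_k$ are pairwise distinct; there are $(q-1)(q-2)\ge q-1$ such pairs. Set $(a_0,b_0)=(0,0)$.
\item Use the codewords $(j,\,j+a_k,\,j+b_k,\,k,\,k)$ for $j,k\in\Z_q$.
\item Let $V_1$ be the identity. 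Let $V_2$ send $u$ on input $(u,u)$, and $0$ on any input with two distinct entries.
\end{itemize}
The verification rests on four facts.
\begin{itemize}
\item Codewords with the same $k$ have $V_1$-triples that differ in all three coordinates.
\item An error on $\inn(V_2)$ leaves the $V_1$-triple intact and makes $V_2$ output a symbol in $\{k,0\}$.
\item Distinct codewords have distinct $V_1$-triples.
\item A triple $(j,j,j)$ is at distance at least $2$ from any triple with three distinct entries.
\end{itemize}
Checking the three possible pairings of error locations (both errors on $V_1$'s side, both on $V_2$'s side, one on each) against these facts shows the code is unambiguous. This gives $\CC_1(\mathfrak{S}_{3,1,1},\mA,\mU_S,1)=2$ also for $|\mA|=6$.
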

\begin{proof}
We provide the proofs of the above statements in the same order. For the corresponding upper bounds, we always refer to the bound \eqref{eq:thm19_reads}. 
\begin{enumerate}
    \item The statement follows from the definition of Family B.
    \item When $b=0$, both vertices have the same number of incoming and outgoing edges. As explained in the beginning of Section \ref{sec:threefam}, both functions can be assumed to be the identity. This then transforms the question into the classical coding theory problem of determining the maximum size of a code of length $a+s$ and minimum distance at least $3$ for the given alphabet size.  
    \item The first part follows from the definition of the Diamond Network. If $(a,b,s) = (1,b,1)$ with $b >1$, then one could send a repetition code along the incoming edges of $V_2$ and put a majority decoder at $V_2$. Since $b+1\ge 3$, we get the desired result.
    \item Let $\mC = \{(a,a,a,a): a \in \mA\}$, and let $\mF_{V_1}$ be the identity function and let $\mF_{V_2}$ be the function that forwards the incoming packet on $e_3$. For any action of the adversary, majority decoding at $T$ achieves the corresponding upper bound for any alphabet size.
    \item Assume $b \neq 1$. Then the corresponding upper bound $2$ is met: The source sends a $[3,1,3]$ code to~$V_1$, and $V_1$ uses the identity function. The terminal $T$ decodes and gets one of the symbols. Similarly, the source sends a $[3,1,3]$ code along the edges $e_4$, $e_5$,  $e_6$ and some fixed symbols along the other~$b+1-3$ (recall $b\ge 2$) incoming edges of $V_2$. The function of $V_2$ decodes the code that is sent along $e_4$, $e_5$, $e_6$ and sends it to the terminal so that the terminal gets the second symbol as desired. 
    
    Assume that $\lvert \mA \rvert \ge 3$. Then, there exists a $[4,2,3]$ MDS code over $\mA$. The source sends it along $e_1,e_2,e_3$ and $e_4$ and repeats fixed alphabet symbols along the other $b+1-3$. The node $V_1$ uses the identity function and the function at $V_2$ only forwards what is on the edge $e_4$. The terminal then simply uses the decoder of that MDS code. 
    
    Finally, assume $(b,\lvert \mA \rvert)=(1,2)$, the only remaining case. The function at $V_1$ can again be assumed to be the identity function. For the function at $V_2$, there are~$2^4=16$ options. There are $120$ different error correcting codes of length $5$ and minimum distance $3$. Simply, going through all the $16 \cdot 120 = 1920 $ pairs, one concludes that  $\CC_1(\mathfrak{S}_{3,1,1},\mA,\mU_S,1) = 1$ when $\lvert \mA \rvert=2$. \qedhere
\end{enumerate}
\end{proof}

\begin{remark}
When $b=0$, Part~\ref{item2} in Proposition \ref{prop:newfamily} shows that the two intermediate nodes can be combined into one. Even further, that node can be disregarded as forwarding simply achieves the upper bound of \eqref{eq:thm19_reads}. This can also be seen with the upper bound on~$\CC_1(\mathfrak{S}_{a,b,s},\mA,\mU_S,1)$ in the case $b=0$. Although the upper bound on~$\CC_1(\mathfrak{S}_{a,b,s},\mA,\mU_S,1)$ only depends on whether $b$ is zero or not, the value of $b$ does affects the capacity; see, for example, Part~\ref{item3} of Proposition \ref{prop:newfamily}. 

Part~\ref{item1} of Proposition~\ref{prop:newfamily} already proves that the upper bound of \eqref{eq:thm19_reads} is not always met for any choice of parameters. However, there are various cases, given in Proposition \ref{prop:newfamily}, where the bound of \eqref{eq:thm19_reads} is met with equality; see, for example, Part~\ref{item4}. 

Lastly, there are cases where the bound of \eqref{eq:thm19_reads} is always met except for a single choice of parameters; see, for instance, Part~\ref{item5} of Proposition \ref{prop:newfamily}.
\end{remark}

In the next result, we rewrite \eqref{eq:thm19_reads} and remove the ``maximum" in the upper bound.
\begin{proposition}
\label{prop:updateupper}
We have $$\CC_1(\mathfrak{S}_{a,b,s},\mA,\mU_S,1) \le
\begin{cases}      
      s,  & \text{if } b\neq 0,\  a\le 2, \mbox{ and }\lvert \mA \rvert < s+1, \\
      a+s-2,  & \text{if } b\neq 0,\  a > 2, \mbox{ and }\lvert \mA \rvert < a+s-1,
\end{cases}$$ and
$$\CC_1(\mathfrak{S}_{a,b,s},\mA,\mU_S,1) =
\begin{cases}
      \log_{|\mA|}A_{\mA}(a+s,3),  & \text{if } b= 0, \\
      s,  & \text{if } b\neq 0,\  a\le 2 \mbox{ and }\lvert \mA \rvert \ge s+1, \\
      a+s-2,  & \text{if } b\neq 0,\  a > 2 \mbox{ and }\lvert \mA \rvert \ge a+s-1.
\end{cases}$$
\end{proposition}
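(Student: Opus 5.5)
The upper bounds in the first display are just \eqref{eq:thm19_reads} with the maximum evaluated: for $b\neq 0$ we have $\max\{0,a-2\}=0$ when $a\le 2$ and $\max\{0,a-2\}=a-2$ when $a>2$. The side conditions on $|\mA|$ are not needed for these inequalities. So the work is all in the second display. The case $b=0$ is Part~\ref{item2} of Proposition~\ref{prop:newfamily}: both functions can be taken to be the identity, so the network behaves like a single channel of length $a+s$ with one error. Read in logarithmic units, its capacity is $\log_{|\mA|}A_{\mA}(a+s,3)$. For the two remaining cases it suffices to give matching lower-bound constructions. I will take $\mA=\F_q$, since MDS codes are needed.

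\emph{Case $b\neq 0$, $a>2$, $|\mA|\ge a+s-1$.} Since $a+s\le q+1$, a (doubly extended Reed--Solomon) $[a+s,a+s-2,3]$ MDS code $\mC$ exists. The source places the first $a$ coordinates of a codeword on $e_1,\dots,e_a$, and the remaining $s$ coordinates on $s$ of the incoming edges of $V_2$. It fills the other $b$ incoming edges of $V_2$ with a fixed symbol. Both $\mF_{V_1}$ and $\mF_{V_2}$ forward these $a+s$ relevant symbols. The terminal therefore sees a codeword of $\mC$ with at most one corrupted coordinate, and minimum-distance decoding recovers it. This gives rate $a+s-2$, matching the bound.

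\emph{Case $b\neq 0$, $a\le 2$, $|\mA|\ge s+1$.} For $a=2$ this is the same construction with an $[s+2,s,3]$ MDS code, which exists since $s+2\le q+1$. For $a=1$ and $b\ge 2$, vertex $V_2$ has $s+b\ge s+2$ inputs. The source sends an $[s+2,s,3]$ MDS codeword on $s+2$ of them, and $V_2$ decodes it locally and outputs the $s$ information symbols. A single error can affect $V_2$'s inputs at most once, so $V_2$ always outputs the correct message. The edge into $V_1$ carries nothing essential, and $T$ reads the message from $V_2$'s outputs. This gives rate $s$.

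The main obstacle is the subcase $a=b=1$. There the network is $\mathfrak{B}_s$ by Part~\ref{item1} of Proposition~\ref{prop:newfamily}, and \cite{beemer2023network} shows $\CC_1(\mathfrak{B}_s,\mA,\mU_S,1)<s$. So the equality with $s$ cannot hold for $(a,b)=(1,1)$ and $s\ge 2$; for $s=1$ it is the Diamond Network, with capacity $\log_{|\mA|}(|\mA|-1)<1$. When writing the proof I would check this against the intended reading of the statement. I expect it to require either $a=2$ or $b\ge 2$ in the second line, with $(a,b)=(1,1)$ treated as the Family~B exception, and I would state the result with that restriction. The constructions above cover exactly the remaining parameters.
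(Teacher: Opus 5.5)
Your proposal is correct. In the case $a=1$ it is more careful than the paper's own proof, and your diagnosis of $(a,b)=(1,1)$ is also right.

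\textbf{Where the paper's argument fails.} The paper handles the middle line of the second display the way you handle $a=2$: send an $[s+2,s,3]$ MDS codeword, forward at $V_1$ and $V_2$ while ignoring the extra inputs of $V_2$, and decode at $T$. It justifies this for every $a\le 2$ by the count $a+b+s\ge s+2$. That inequality counts the outgoing edges of $S$, not what reaches $T$. Under forwarding the terminal sees only $a+s$ coordinates. For $a=1$ this is $s+1$ coordinates, a punctured code of minimum distance $2$, which cannot correct one error.

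\textbf{Your repair and the false subcase.} Your local decoding at $V_2$ when $b\ge 2$ fixes this correctly. It is the length-$s$ version of the repetition/majority argument in Part~\ref{item3} of Proposition~\ref{prop:newfamily}. For $(a,b)=(1,1)$ the claimed equality is false, since $\mathfrak{S}_{1,1,s}=\mathfrak{B}_s$:
\begin{itemize}
\item for $s=1$, the value $s$ contradicts Theorem~\ref{thm:diamond}, and also Part~\ref{item3} of Proposition~\ref{prop:newfamily} in the same paper;
\item for $s=2$, part~2 of Lemma~\ref{lem:kurz22} gives $\CC_1(\mathfrak{B}_2,\mA,\mU_S,1)\le\log_4 11<2$, even though $|\mA|=4\ge s+1$.
\end{itemize}
So that line should be restricted to $a=2$ or $b\ge 2$, as you propose. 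For the first display, for $b=0$, and for $a>2$, your argument is the same as the paper's.

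\textbf{A limitation you share with the paper.} Getting the MDS codes from $|\mA|\ge s+1$ (resp.\ $|\mA|\ge a+s-1$) needs $|\mA|$ to be a prime power. You make this explicit by taking $\mA=\F_q$, but the statement is for arbitrary alphabets. For $|\mA|=6$, a code of length $4$, size $36$ and minimum distance $3$ would give a pair of orthogonal Latin squares of order $6$, which does not exist. So neither your argument nor the paper's settles, for instance, $\mathfrak{S}_{2,1,2}$ over a $6$-letter alphabet.
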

\begin{proof}
The only cases that were not already solved in this section so far are:
\begin{itemize}
    \item $b\neq 0,\  a \le 2,\ |\mA| \ge s+1$,
    \item $b\neq 0,\  a > 2 \mbox{ and }|\mA| \ge a+s-1$.
\end{itemize}
In those cases, there exists an $[s+2,s,3]$ MDS code over $\mA$ and an $[a+s,a+s-2,3]$ MDS code over $\mA$, respectively. Then, one simply forwards at the intermediate vertices ignoring the extra edges and decodes at the terminal $T$. This works as $a+b+s \ge s+2$, since $b \neq 0$ and~$1 \le a \le 2$ in the first case, and $a+b+s \ge a+s$ (which is trivially correct) in the second case.
\end{proof}

We conclude this section by solving one of the cases where the network capacity is not known. As written in Proposition~\ref{prop:updateupper}, the value $\CC_1(\mathfrak{S}_{a,b,s},\mA,\mU_S,1)$ is unknown, in general, if~$b \neq 0$, $a>2$ and $|\mA| < a+s-1.$

\begin{theorem}
\label{thm:another_interesting_ex}
Let $(a,b,s)=(3,1,2)$. We have 
\begin{enumerate}
    \item $\CC_1(\mathfrak{S}_{3,1,2},\F_2,\mU_S,1) = \log_2 6$.
    \item $3 \ge \CC_1(\mathfrak{S}_{3,1,2},\F_3,\mU_S,1) \ge \log_3 15$.
    \item $\CC_1(\mathfrak{S}_{3,1,2},\mA,\mU_S,1) = 3$ if $\lvert \mA \rvert \ge 4$.
\end{enumerate}
\end{theorem}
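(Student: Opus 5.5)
Here $\mathfrak{S}_{3,1,2}=([3,3],[3,2])$: $V_1$ has three vulnerable incoming edges $e_1,e_2,e_3$ and three outgoing edges, and $V_2$ has three vulnerable incoming edges $e_4,e_5,e_6$ and two outgoing edges. By \cite[Lemma 6.5]{beemer2023network} we may take $\mF_{V_1}$ to be the identity. The terminal therefore sees $(x_1,x_2,x_3,\mF_{V_2}(x_4,x_5,x_6))$, and at most one of $x_1,\dots,x_6$ has been altered. By \eqref{eq:thm19_reads} the capacity is at most $a+s-2=3$, which settles the upper bound in part~2.

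\textbf{Part 3.} This follows from Proposition~\ref{prop:updateupper}, since $b\neq0$, $a=3>2$ and $|\mA|\ge 4=a+s-1$. Concretely, we send a $[5,3,3]$ MDS code on $e_1,\dots,e_5$ and a constant on $e_6$. $V_2$ forwards $e_4,e_5$, and the terminal decodes the MDS code. Such a code exists over any alphabet of size at least $4$: over a field use an extended Reed--Solomon or doubly extended code, and for non-field alphabets of size $\ge4$ use a combinatorial MDS code of length $q+1$ and dimension $3$, or an explicit construction.

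\textbf{Part 1 ($\F_2$), lower bound.} I would construct the code explicitly, or find it by the SAT search of Appendix~\ref{sec:MINLPmodel}, giving $|\mC|=6$. A natural candidate takes the first three coordinates to range over six of the eight words of $\F_2^3$. On $e_4,e_5,e_6$ we place a function of the codeword such that $V_2$ can compress any single-error version of its triple into $2$ bits, i.e.\ $4$ states, that still disambiguate. The unambiguity condition is the following. For codewords $c\neq c'$ and error patterns hitting at most one edge in total, the terminal views $(y_{1..3},\mF_{V_2}(y_{4..6}))$ must differ. Split into three cases: both errors in the $V_1$ part, both in the $V_2$ part, or one in each. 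Equivalently, we need a coloring $\mF_{V_2}\colon\F_2^3\to\F_2^2$ of the ball-neighbourhoods that separates every confusable pair. This is a finite check, verified exhaustively.

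\textbf{Part 1, upper bound $|\mC|\le 6$.} This is the main obstacle. I would proceed computationally as in Proposition~\ref{prop:newfamily}(\ref{item5}). There are $4^8=65536$ functions $\F_2^3\to\F_2^2$; reduce them by the symmetries that permute $e_4,e_5,e_6$ and relabel the output. For each remaining function, build the confusability graph on $\F_2^6$: two inputs are adjacent if some single-edge corruption of each yields the same terminal view. Then compute the independence number, using an ILP/SAT formulation as in the appendix, and check it is at most $6$. If I wanted a human argument instead, I would try the following. Projecting onto $(x_1,x_2,x_3)$, codewords sharing a projection, or at Hamming distance $\le 2$ in it, must be separated by $V_2$ alone, which only has $4$ output values. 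A counting argument over $\F_2^3$ would then cap the code at $6$, but I expect this to be delicate, so the computer search is the primary route.

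\textbf{Part 2 ($\F_3$), lower bound $\log_3 15$.} Exhibit a code of size $15$ together with a function $\mF_{V_2}\colon\F_3^3\to\F_3^2$, found by the same SAT model, and verify it by the same exhaustive check. This step is purely constructive.
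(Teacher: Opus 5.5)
\paragraph{Parts 1 and 2.} Here your plan is the paper's. The lower bounds are computer-found certificates. The paper prints the size-$6$ binary code and the size-$15$ ternary code, together with their $\mF_{V_2}$, in Appendix~B. Your proposal becomes a proof only once such certificates are actually written down. The bound $3$ in Part~2 is the cut-set bound, and $|\mC|\le 6$ in Part~1 is computational. Your variant of the latter is a valid and much lighter alternative to the paper's joint MINLP: fix $\mF_{V_1}=\mathrm{id}$, run over the $4^8$ maps $\mF_{V_2}$, and for each one compute the independence number of the confusability graph on $\F_2^6$.

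\paragraph{The gap in Part 3.} Your justification for alphabets that are not fields, namely ``a combinatorial MDS code of length $q+1$ and dimension $3$'', is false. For $|\mA|=6$ there is not even a code of length $5$, size $6^3$ and minimum distance $3$. In such an MDS code every $3$ coordinates form an information set, so exactly $36$ codewords have a given first coordinate. Deleting that coordinate leaves a code $\{(i,j,L(i,j),M(i,j))\}$ of length $4$ and distance $3$, which is a pair of orthogonal Latin squares of order $6$, and these do not exist. Forwarding two of $V_2$'s inputs reduces the problem exactly to finding such a length-$5$ code, so no forwarding scheme reaches $3$ when $|\mA|=6$. The paper's own one-line argument (``whose existence is guaranteed by the assumption on the size of the alphabet'') has the same hole.

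\paragraph{A repair.} The statement is still true, but $V_2$ must do genuine processing. Let $q=|\mA|\ge5$ and identify $\mA$ with $\Z_q$.
\begin{itemize}
\item Encode the message $(w_1,w_2,a)\in\Z_q^3$ as $(a,a+w_1,a+w_2,w_1,w_2,w_1+w_2)$, and take $\mF_{V_1}=\mathrm{id}$.
\item Set $\mF_{V_2}(x_4,x_5,x_6)=(x_4,x_5)+\phi(x_6-x_4-x_5)$, where $\phi(0)=(0,0)$.
\item For $\delta\neq0$, let $\phi(\delta)=(s,t)$ be any pair with $s,t\notin\{0,\delta\}$ and $s-t\notin\{0,\delta,-\delta\}$. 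There are at least $(q-3)(q-4)>0$ such pairs; for example $(4,2)$ works for $q=6$, $\delta=1$.
\end{itemize}
The scheme rests on three facts.
\begin{itemize}
\item For fixed $w$, the triples $(a,a+w_1,a+w_2)$ form a distance-$3$ code.
\item Triples belonging to $w\neq w'$ are at distance $\le1$ only if $w'$ lies on one of the three lines through $w$: $w_1'=w_1$, $w_2'=w_2$, or $w_1'-w_2'=w_1-w_2$.
\item Checking errors on $e_4$, $e_5$, $e_6$ one at a time shows that a corrupted input makes $V_2$ output a pair off all three lines through the true $w$. These three cases give exactly the conditions imposed on $(s,t)$.
\end{itemize}
The terminal, on seeing $(y,z)$, outputs $(z,a')$ if some $a'$ satisfies $\dH(y,(a',a'+z_1,a'+z_2))\le1$. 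Otherwise it reads the message directly off $y$. Together with an $\F_4$ MDS code for $q=4$, this proves Part~3 for every $|\mA|\ge4$.
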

\begin{proof}
For the first two items, we refer the reader to Appendix~\ref{sec:concretecodes}.
There, we provide for both items an unambiguous code of size~6 and~15, respectively.
Moreover, to show that the exact value of the capacity in Part~1 is~$\log_2 6$, we used a computer-based proof that no code of size~7 exists; see Appendix~\ref{sec:concretecodes} for more details.\footnote{The code used in our experiments is available at~\cite{zenodo} and \href{https://github.com/christopherhojny/supplement_network_codes}{GitHub}.} Note that we were also able to prove Part~1 ``by hand''; however, the proof is not instructive and very cumbersome, thus we choose not to inflict it upon the reader. In words, we enumerated all binary codes of minimum distance 3 of size at least 6 up to symmetry. That gives one possible code of size 8 and three codes of size 7. The proof is then concluded by showing impossibility results for all of these four codes separately. 

For the last part, Proposition \ref{prop:updateupper} gives $\CC_1(\mathfrak{S}_{3,1,2},\mA,\mU_S,1) \le 3$. This upper bound can be achieved simply by forwarding at the intermediate nodes and sending an $[5,3,3]$ MDS code, whose existence is guaranteed by the assumption on the size of the alphabet.
\end{proof}

Theorem \ref{thm:another_interesting_ex} yet provides another example where the upper bound of Theorem \ref{thm:sbound} is met with equality over large alphabet sizes. However, the bound is not attained and requires more computationally involved techniques to fully compute over small alphabet sizes.

\section{Separability}
\label{sec:separable}
The main challenge in establishing the capacity of networks with restricted adversaries appears to be the necessity of designing inner and outer codes \textit{jointly}. This is not the case when the adversary is not restricted to operate on a proper subset of the edges. In such cases, a linear network (inner) code in combination with any rank metric (outer) code with the appropriate parameters can achieve capacity; see~\cite{silva2008rank} and the references therein. 

The goal of this section is to introduce a formal notion of  ``separability'' of a network with respect to a metric for the outer code. We will then be able to formally prove that networks with unrestricted adversaries are separable with respect to the rank metric ($\drk$), whereas networks with restricted adversaries are not. We will also prove that networks with unrestricted adversaries are in general not separable with respect to the Hamming metric ($\dH$).

\begin{definition}
    Let $\mA$ be an alphabet and let $n \ge 1$ be an integer. A \textbf{metric} on $\mA^n$ is a function $d\colon\mA^n \times \mA^n \to \R$ with the following properties:
    \begin{enumerate}
        \item $d(x,y) \ge 0$, with equality if and only if $x=y$ for all $x,y \in \mA^n$;
        \item $d(x,y)=d(y,x)$ for all $x,y \in \mA^n$;
        \item $d(x,z) \le d(x,y)+d(y,z)$ for all $x,y,z \in \mA^n$.
\end{enumerate}
A \textbf{code} is a nonempty subset $\mC \subseteq \mA^n$. We say that $\mC$ \textbf{corrects} $t$ \textbf{errors} with respect to the metric $d$ if the following holds: 
$$\textnormal{For every } x \in \mC \mbox{ and } y \in \mA^n \mbox{ with } d(x,y) \le t,\ \textnormal{we have } \{z \in \mC \mid d(z,y) \le t\}=\{x\}.$$
\end{definition}

\begin{definition}
   Let $\mN=(\mV,\mE, S, \bfT)$ be a network, $\mA$ be an alphabet, $\mU \subseteq \mE$ an edge set with~$\mU \supseteq \out(S)$, and $t \ge 0$ an integer. Let $n=\degout(S)$ and let $d\colon \mA^n \times \mA^n \to \R$ be a metric. We say that $(\mN,\mA,\mU,t)$ is \textbf{separable} with respect to $d$ if there exists a network code $\mF$ for~$(\mN,\mA)$ with the following property: If 
   $\mC \subseteq \mA^n$ corrects $t$ errors and $\smash{\log_{|\mA|}(\mC)=\lvert \mA \rvert^{\CC_1(\mN,\mA,\mU,t)}}$, then $\mC$ is unambiguous for the channel $\Omega[\mN,\mA,\mF,S\to T,\mU,t]$ for every $T \in {\bf T}$.   
\end{definition}

In words, the previous definition declares a network separable when it admits a ``universal'' network code $\mF$. When such a network code is used, the source can design the outer code $\mC$ locally, and any local outer code will extend to a global outer code. In particular, the source may not be unaware of the network code $\mF$.

It is well known that if
$\mN=(\mV,\mE, S, \bfT)$ is a network with alphabet $\mA=\F_{q^m}$, then~$(\mN,\mA,\mE,t)$ is separable with respect to the rank metric (note that the adversary is unrestricted).
A proof of this fact can be found in~\cite{dikaliotis2011multiple} following the main ideas of \cite{silva2008rank}. 

\begin{figure}[h!]
\centering

\begin{tikzpicture}
\tikzset{vertex/.style = {shape=circle,draw,inner sep=0pt,minimum size=1.9em}}
\tikzset{nnode/.style = {shape=circle,fill=myg,draw,inner sep=0pt,minimum
size=1.9em}}
\tikzset{edge/.style = {->,> = stealth}}
\tikzset{dedge/.style = {densely dotted,->,> = stealth}}
\tikzset{ddedge/.style = {dashed,->,> = stealth}}

\node[vertex] (S1) {$S$};

\node[shape=coordinate,right=\mynodespace*0.5 of S1] (K) {};

\node[nnode,right=\mynodespace of K] (V) {$V$};

\node[vertex,right=\mynodespace of V] (T) {$T$};

\draw[ddedge,bend left=25] (S1)  to node[fill=white, inner sep=3pt]{\small $e_1$} (V);

\draw[ddedge,bend right=25] (S1)  to node[fill=white, inner sep=3pt]{\small $e_3$} (V);

\draw[ddedge,bend right=0] (S1)  to node[fill=white, inner sep=3pt]{\small $e_2$} (V);

\draw[edge,bend left=0] (V)  to node[fill=white, inner sep=3pt]{\small $e_{4}$} (T);
\end{tikzpicture}
\caption{Network for Proposition \ref{prop:restr}.\label{easyex}}
\end{figure}

The next result shows that networks with 
restricted adversaries
are not separable, in general, with respect to the Hamming and the rank metric. 

\begin{proposition} \label{prop:restr}
    Let $\mN=(\mV,\mE, S, \bfT)$ be the network in Figure~\ref{easyex}, and let $\mU=\{e_1,e_2,e_3\}$. Let the alphabet $\mA$ be arbitrary. Then $(\mN,\mA,\mU,1)$ is not separable with respect to $\dH$ and~$\drk$, where $\dH$ is the Hamming metric and $\drk$ is the rank metric.
\end{proposition}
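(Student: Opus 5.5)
The plan is to show that no single function $\mF_V\colon \mA^3\to\mA$ can serve all outer codes of capacity size at once, using one code and a translate of it.

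\emph{Capacity.} First I pin down the capacity. The cut $\{e_4\}$ contains no vulnerable edge, so Theorem~\ref{thm:sbound} gives $\CC_1(\mN,\mA,\mU,1)\le 1$. The repetition code $\{(x,x,x)\st x\in\mA\}$, combined with majority decoding at $V$ and forwarding on $e_4$, attains this bound. Hence $\CC_1=1$, and the codes in the separability definition are exactly the codes $\mC\subseteq\mA^3$ with $|\mC|=|\mA|$ that correct one error. For such a code, write $B(x)$ for the set of words at distance at most $1$ from $x$.

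\emph{Rigidity lemma.} Fix any network code, i.e.\ any $\mF_V$, and any such code $\mC$ that is unambiguous. Since $\mU=\out(S)$, the channel output on a codeword $x$ is $\mF_V(B_{\dH}(x))\subseteq \mA$, because one adversarial change on $e_1,e_2,e_3$ produces exactly the Hamming ball of radius one. Unambiguity says that the $|\mA|$ nonempty sets $\mF_V(B_{\dH}(x))$, $x\in\mC$, are pairwise disjoint subsets of $\mA$. They must therefore be distinct singletons. So $\mF_V$ is constant on each ball $B_{\dH}(x)$, and takes different values on balls of different codewords. For the rank metric, every Hamming-weight-one error also has rank one, so $B_{\dH}(x)\subseteq B_{\drk}(x)$. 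It therefore suffices to reason with Hamming balls, which are what the channel produces.

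\emph{Contradiction via a translate.} Suppose $\mF_V$ were universal. Pick a code $\mC_1$ of the required type: the repetition code for $\dH$, or a Gabidulin/MRD code of length $3$ and dimension $1$ over $\mA=\F_{q^m}$ for $\drk$. Pick distinct $c,c'\in\mC_1$ and set $w=c'-c$. Write $w=u_1+u_2+u_3$, where $u_i$ is the vector supported on coordinate $i$ with entry $w_i$; each $u_i$ has Hamming weight and rank weight at most one. Let $v=u_1+u_2$ and $\mC_2=\mC_1+v$. A translate is an isometry for both metrics, so $\mC_2$ also has size $|\mA|$ and corrects one error, and universality makes it unambiguous too. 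Now $d=c+v\in\mC_2$, and the word $y=c+u_1$ lies in both $B_{\dH}(c)$ and $B_{\dH}(d)$. Also $c'=d+u_3$ lies in $B_{\dH}(d)$ and in $B_{\dH}(c')$. Applying the rigidity lemma to $\mC_1$ gives $\mF_V(y)=\mF_V(c)$. Applying it to $\mC_2$ (constancy on $B_{\dH}(d)$) gives $\mF_V(y)=\mF_V(c')$. Together these force $\mF_V(c)=\mF_V(c')$, which contradicts injectivity across the balls of $\mC_1$. Hence $(\mN,\mA,\mU,1)$ is not separable.

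\emph{Main obstacle.} The delicate point is not the contradiction but the existence of the code $\mC_1$ in the rank case. A rank-metric code of size $q^m$ correcting one error in length $3$ needs $m\ge 3$. When no such code exists, the universality condition is vacuous, so I would state that case explicitly (as a degenerate case of the definition, or by restricting to $m\ge3$). In the Hamming case the repetition code works for every alphabet with $|\mA|\ge 2$, so no restriction is needed there.
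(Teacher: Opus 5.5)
Your proof is correct. It shares the paper's skeleton: since $\CC_1=1$, the function $\mF_V$ must act as a decoder for every admissible code of size $|\mA|$, and separability would require one $\mF_V$ that does this for all of them at once. The paper stops at this point and simply asserts that no universal decoder exists in the Hamming or rank metric. You actually prove it, in two steps:
\begin{itemize}
\item The counting step shows that $\mF_V$ is constant on each Hamming ball of radius one around a codeword, because pairwise disjoint nonempty images inside $\mA$ must be distinct singletons.
\item The translate $\mC_1+v$ is again admissible, since translations are isometries for both $\dH$ and $\drk$. It yields the chain $c\to y\to d\to c'$, which forces $\mF_V(c)=\mF_V(c')$.
\end{itemize}

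Your version is also more careful in two places.

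First, you use Hamming balls in both cases, which is exactly what one corrupted edge among $e_1,e_2,e_3$ produces. The paper states the decoder property for balls in ``the respective metric''. For $\drk$ this does not follow from unambiguity, because a rank-one error can have Hamming weight $3$ and so cannot be produced by the adversary.

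Second, your caveat about $m$ genuinely corrects the statement. For $\mA=\F_{q^m}$ with $m\le 2$, every vector of $\F_{q^m}^3$ has rank at most $2$, so no code of size $q^m$ corrects one rank error. The universality condition is then vacuous, and the network is trivially separable with respect to $\drk$. The rank part of the proposition therefore needs $m\ge 3$, despite the phrase ``the alphabet $\mA$ be arbitrary''.

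One small repair. In the Hamming case over an arbitrary finite set, writing $c'-c$ presupposes an additive structure on $\mA$. You can fix any abelian group structure on $\mA$, since coordinatewise translation is a Hamming isometry for every such structure. Alternatively, with $c=(a,a,a)$ and $c'=(b,b,b)$, take $\mC_2=\{(\pi(x),\pi(x),x) \mid x\in\mA\}$ for a permutation $\pi$ of $\mA$ with $\pi(a)=b$. Then $d=(b,b,a)$ and the same chain goes through.
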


\begin{proof}
We have $\CC_1(\mN,\mA,\mU,1)=1$. Let $\mC$ be a code that corrects 1 error with $\lvert \mC \rvert=\lvert \mA \rvert$. Given~$c \in \mC$, we denote by $B_1(c)$
the ball of radius 1 with center $c$ with respect to the respective metric. The function of $V$ must be a ``decoder" of $\mC$ in the sense that the following must hold:
For any $x\in B_1(c)$ and $y\in B_1(c')$ with $c \neq c'$, we have $\mF_V(x) \neq \mF_V(y)$. If not, this would create a collision as $d(x,c) \le 1$ and~$d(y,c') \le 1$. Now the result follows from the fact that there does not exist a universal decoder in the above sense that works for all codes in the Hamming or rank metric. 
\end{proof}

Note that $(\mN,\mA,\mE,1)$ is separable with respect to the rank metric, where $\mN$ and $\mA$ are as in Proposition \ref{prop:restr}. 

It is natural to ask if networks with unrestricted adversaries are separable with respect to metrics that are different from the rank metric. An obvious candidate is the Hamming metric. However, the next result shows that, in general,
networks are not separable with respect to the Hamming metric, not even when the adversary is unrestricted.

\begin{figure}[h!]
\centering

\begin{tikzpicture}

\tikzset{vertex/.style = {shape=circle,draw,inner sep=0pt,minimum size=1.9em}}
\tikzset{nnode/.style = {shape=circle,fill=myg,draw,inner sep=0pt,minimum
size=1.9em}}
\tikzset{edge/.style = {->,> = stealth}}
\tikzset{dedge/.style = {densely dotted,->,> = stealth}}
\tikzset{ddedge/.style = {dashed,->,> = stealth}}

\node[vertex] (S1) {$S$};

\node[shape=coordinate,right=\mynodespace of S1] (K) {};

\node[nnode,above=0.8\mynodespace of K] (V1) {$V_1$};

\node[nnode,below=0.8\mynodespace of K] (V3) {$V_3$};

\node[nnode,right=0.1\mynodespace of K] (V2) {$V_2$};

\node[vertex,right=2.2\mynodespace of V1 ] (T1) {$T_1$};

\node[vertex,right=2.2\mynodespace of V3] (T2) {$T_2$};

\draw[ddedge,bend left=0] (S1)  to node[fill=white, inner sep=3pt]{\small $e_1$} (V1);

\draw[ddedge,bend left=0] (S1) to  node[fill=white, inner sep=3pt]{\small $e_2$} (V2);

\draw[ddedge,bend right=0] (S1) to  node[fill=white, inner sep=3pt]{\small $e_3$} (V3);

\draw[ddedge,bend left=0] (V1) to  node[fill=white, inner sep=3pt]{\small $e_4$} (V2);

\draw[ddedge,bend left=0] (V3) to  node[fill=white, inner sep=3pt]{\small $e_5$} (V2);

\draw[ddedge,bend left=0] (V1) to  node[fill=white, inner sep=3pt]{\small $e_6$} (T1);

\draw[ddedge,bend left=0] (V3) to  node[fill=white, inner sep=3pt]{\small $e_9$} (T2);

\draw[ddedge,bend left=10] (V2) to  node[fill=white, inner sep=3pt]{\small $e_{7}$} (T1);

\draw[ddedge,bend right=10] (V2) to  node[fill=white, inner sep=3pt]{\small $e_{8}$} (T1);

\draw[ddedge,bend left=10] (V2) to  node[fill=white, inner sep=3pt]{\small $e_{9}$} (T2);

\draw[ddedge,bend right=10] (V2) to  node[fill=white, inner sep=3pt]{\small $e_{10}$} (T2);

\end{tikzpicture} 

\caption{Network for Proposition~\ref{prop:notH}.}
\label{figex}
\end{figure}
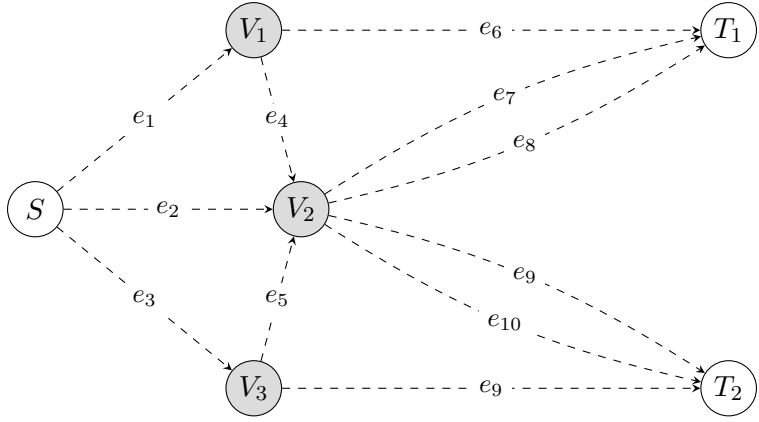

\begin{proposition} \label{prop:notH}
    Let $\mN=(\mV,\mE, S, \bfT)$ be the network in Figure~\ref{figex}. Then $(\mN,\F_2,\mE,1)$ is not separable with respect to $\dH$.
\end{proposition}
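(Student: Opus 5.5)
The plan is to first pin down $\CC_1(\mN,\F_2,\mE,1)$ and then show that no single network code works for every binary code of that size correcting one Hamming error. I write $V_1\to V_2$ as $e_4$, $V_3\to V_2$ as $e_5$, $V_1\to T_1$ as $e_6$, $V_2\to T_1$ as $e_7,e_8$, $V_2\to T_2$ as two parallel edges, and $V_3\to T_2$ as one edge (Figure~\ref{figex} labels two distinct edges $e_9$).

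\textbf{Step 1: the capacity is $1$.} The cut $\out(S)=\{e_1,e_2,e_3\}$ has size $3$, so $\mu_\mN\le 3$. Theorem~\ref{thm:sbound} with $\mU=\mE$ then gives $\CC_1\le 3-2=1$. For the matching lower bound I would exhibit a scheme with two codewords:
\begin{itemize}
\item the source sends the repetition code $\{000,111\}$;
\item $V_1$ and $V_3$ duplicate their input on all outgoing edges;
\item $V_2$ takes the majority of $(e_2,e_4,e_5)$ and copies it on its outgoing edges;
\item each terminal takes the majority of its three incoming symbols.
\end{itemize}
To verify this, I would note that a single corrupted edge either spoils at most one of the three symbols reaching $V_2$, which the majority absorbs, or spoils exactly one of the three symbols reaching a terminal. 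In every case at least two of the three terminal inputs stay correct. Hence $\CC_1=1$, and the codes relevant to separability are exactly the binary codes of length $3$ and size $2$ correcting one error. These are the four antipodal pairs
\[
\{000,111\},\quad \{100,011\},\quad \{010,101\},\quad \{001,110\}.
\]

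\textbf{Step 2: reduce to one universal network code.} Separability would require a single $\mF=\{\mF_{V_1},\mF_{V_2},\mF_{V_3}\}$ that is unambiguous at both $T_1$ and $T_2$ for all four pairs at once. Since the pairs partition $\F_2^3$, this means the following: for every $x\in\F_2^3$, no output at $T_1$ (or at $T_2$) reachable from $x$ with one corrupted edge coincides with one reachable from $x+111$.

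\textbf{Step 3: extract constraints and derive a contradiction.} I would exploit the structure of the nodes. $V_1$ sees only $x_1$ and $V_3$ sees only $x_3$, so $\mF_{V_1},\mF_{V_3}\colon\F_2\to\F_2^2$. Moreover $T_1$ receives only a single symbol from $V_1$, the rest coming from $V_2$, and symmetrically $T_2$ receives a single symbol from $V_3$.

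Now take the pair $x, x'=x+111$. Corrupting $e_6$ or the $V_3\to T_2$ edge costs the adversary its single error. This forces the two outputs of $V_2$ to distinguish $x$ from $x'$ robustly, and in particular to separate the word pairs differing only in coordinate $1$ from those differing only in coordinate $3$. On the other hand, corrupting $e_1$ (resp. $e_3$) alters both $e_4$ and $e_6$ (resp. $e_5$ and the $V_3\to T_2$ edge) at once. Because $x$ ranges over all of $\F_2^3$, the map $\mF_{V_2}\colon\F_2^3\to\F_2^2$ would then have to separate, simultaneously for all four pairs, the corresponding radius-one neighbourhoods. The intended contradiction is that this demands more than the $4$ output values $\F_2^2$ can provide: essentially $V_2$ would have to be a universal decoder for all cosets of the repetition code, as in the proof of Proposition~\ref{prop:restr}.

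\textbf{Main obstacle.} I expect the counting in Step~3 to be the hardest part, since the interaction between the duplicated paths makes the bookkeeping delicate. As a fallback that is fully rigorous, the search space is finite:
\[
16\cdot 16\cdot 4^8\approx 1.7\cdot 10^7
\]
choices of $(\mF_{V_1},\mF_{V_3},\mF_{V_2})$. After quotienting by relabelings of the edge alphabets, which is a large symmetry group, an exhaustive check that none of these choices works for all four codes is feasible, in the spirit of the computer-assisted arguments in Appendix~\ref{sec:concretecodes}.
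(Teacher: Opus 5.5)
\textbf{The gap is Step~3, and it cannot be closed.} For the network as drawn there \emph{is} a universal network code, so $(\mN,\F_2,\mE,1)$ is in fact separable with respect to $\dH$.

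Write $e_9,e_{10}$ for the two edges $V_2\to T_2$ and $e_9'$ for the edge $V_3\to T_2$. Define the network code as follows:
\begin{itemize}
\item $V_1$ copies its input onto $e_4$ and $e_6$;
\item $V_3$ copies its input onto $e_5$ and $e_9'$;
\item $V_2$ puts the symbol of $e_2$ on $e_7$ and $e_{10}$, the symbol of $e_5$ on $e_8$, and the symbol of $e_4$ on $e_9$.
\end{itemize}
If $x=(x_1,x_2,x_3)$ is sent, then $T_1$ reads $(x_1,x_2,x_3)$ along the paths $e_1e_6$, $e_2e_7$, $e_3e_5e_8$. Likewise $T_2$ reads $(x_3,x_1,x_2)$ along $e_3e_9'$, $e_1e_4e_9$, $e_2e_{10}$.

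Every node merely copies. So the symbol on any edge reaches a given terminal only along the path into that terminal containing it, if there is one. For example, $e_4$ has no influence on $T_1$, and $e_5$ has none on $T_2$. The three paths into each terminal are edge-disjoint, so one corrupted edge flips at most one coordinate of what each terminal sees. Hence $\Omega[\mN,\F_2,\mF,S\to T_i,\mE,1](x)$ lies in the radius-one Hamming ball around $x$, up to a fixed coordinate permutation. It follows that every binary code correcting one Hamming error is unambiguous at both terminals, in particular each of your four antipodal pairs. Together with your correct Step~1 ($\CC_1=1$), this is exactly separability.

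\textbf{Why the heuristic fails.} The ``universal decoder'' idea borrowed from Proposition~\ref{prop:restr} does not transfer here, for two reasons.
\begin{itemize}
\item $\mF_{V_2}$ maps $\F_2^3\to\F_2^4$, since $V_2$ has two outgoing edges toward \emph{each} terminal. Your search count should therefore use $16^8$ rather than $4^8$ choices for $\mF_{V_2}$.
\item Each terminal also receives one symbol directly from $V_1$ or $V_3$.
\end{itemize}
So $V_2$ never has to separate radius-one neighbourhoods of antipodal words; it can simply route. Your fallback exhaustive search would return the code above rather than certify the claim.

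\textbf{The paper's proof fails at the same point.} In its notation, $f,g$ are the components of $\mF_{V_2}$ toward $T_1,T_2$, with arguments the symbols on $e_4,e_2,e_5$. The code above has $f(x,y,z)=(y,z)$ and $g(x,y,z)=(x,y)$.
\begin{itemize}
\item Here $f(1,1,1)=(1,1)$ and $f(0,0,0)=(0,0)$, which none of the cases (i)--(iii) covers, since case (iii) assumes $a\neq b$.
\item Even the argument of case (iii) breaks down. In its ``other case'' $\dH(f(0,0,0),f(1,1,0))\le 1$, producing a collision at $T_2$ would require $\dH(g(0,0,0),g(1,1,0))\le 1$. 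This does not follow, and here that distance is $2$.
\end{itemize}

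Your Steps~1 and~2 are sound, but the question they reduce to has the opposite answer: the proposition is false for this network.
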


\begin{proof}
One can check that $\CC_1(\mN,\F_2,\mE,1) = 1$. There are 4 possible outer codes, namely:
\begin{itemize}
    \item $\{(0,0,0),(1,1,1)\}$,
    \item $\{(1,0,0),(0,1,1)\}$,
    \item $\{(0,1,0),(1,0,1)\}$,
    \item $\{(0,0,1),(1,1,0)\}$.
\end{itemize}
Without loss of generality, we have $\mF_{V_1}(x)=\mF_{V_3}(x)=x$ for $x \in \F_2$ as $\degin(V_1)=\degin(V_3)=1$, see \cite[Proposition 2]{hojny2023role}. Now we study the bottleneck at $V_2$. Let 
$$\mF_{V_2}(x,y,z) = (f(x,y,z),g(x,y,z))$$ with $f,g\colon\mA^3 \rightarrow \mA^2$ such that $f$ outputs along $e_7,e_8$ and $g$ outputs along $e_9,e_{10}$. 
The value~$f(x,y,z)$ cannot be equal to $f(1-x,1-y,1-z)$ for $(x,y,z) \in \F_2^3$ due to the structure of all possible outer codes (case-by-case analysis). The same holds for $g$. This shows that the functions $f$ and $g$ have the property that their outputs evaluated at 
$$(0,0,0), (0,1,1),(1,0,1) \mbox{ and }(1,1,0)$$ are pairwise different.

No matter which inner code is selected, the outer code $\{(0,0,0),(1,1,1)\}$ can be used to create a collision as follows: Assume $f(1,1,1)=(a,b)$. We then have three cases:
\begin{enumerate}[label=(\roman*)]
    \item $f(0,0,0)=(a,c)$ with $b\neq c$. The adversary acts on $e_{8}$, changes $b$ to $c$ when $(1,1,1)$ is sent, so that $T_1$ receives $(1,a,c).$ The adversary acts on $e_{6}$, changes $0$ to $1$ when $(0,0,0)$ is sent, so that $T_1$ receives $(1,a,c).$
    \item $f(0,0,0)=(c,b)$ with $a\neq c$. The adversary acts on $e_{7}$, changes $a$ to $c$ when $(1,1,1)$ is sent so that $T_1$ receives $(1,c,b).$ The adversary acts on $e_{6}$, changes $0$ to $1$ when $(0,0,0)$ is sent, so that $T_1$ receives $(1,c,b).$
    \item $f(0,0,0)=(b,a)$ with $a\neq b$. We know that $f(0,1,1), f(1,1,0)$ and $f(0,0,0)$ are pairwise distinct. Thus, either $$\dH(f(0,0,0),f(0,1,1)) \le 1 \mbox{ or } \dH(f(0,0,0),f(1,1,0)) \le 1.$$ Let $\dH(f(0,0,0),f(0,1,1)) \le 1$. (The other case would use the function $g$ and leads to a collision at $T_2$). The adversary acts on $e_{1}$, changes $1$ to $0$ when $(1,1,1)$ is sent so that~$T_1$ receives $(0,f(0,1,1)).$ The adversary acts on $e_{7}$ or $e_{8}$ when $(0,0,0)$ is sent, so that $T_1$ receives~$(0,f(0,1,1))$ which follows from the fact that~$\dH(f(0,0,0),f(0,1,1)) \le 1$. \qedhere    
\end{enumerate} 
\end{proof}

\section{Conclusions and Future Research Questions}
\label{sec:conclusion}

We computed the capacity of one of the families introduced in \cite{beemer2023network} and improved some cases of \cite{kurz2022capacity} on Family B. Furthermore, we introduced a general family of networks in Section \ref{sec:newfamily} that contains Family B and computed its capacity for certain parameters and explained the interesting phenomena that they are examples of in network coding. The following research question naturally originates from this: 

\begin{enumerate}
    \item Can computing the capacity of networks in Section \ref{sec:newfamily} for more parameters and the techniques used therein lead to greater understanding on the capacity of networks of Family B?  
\end{enumerate}

In addition to the fact that classical approaches to compute the capacity of networks with unrestricted adversaries are not even close to being optimal in the case of restricted adversaries (see \cite{beemer2023network}), we note in this paper that one of the main difficulties in estimating or achieving the capacity of networks with restricted adversaries is the inevitability of designing outer and inner codes together. To this end, Section \ref{sec:separable} studies the separability of networks. We only focused on rank and Hamming metric, and gave examples of networks where the network is not separable with respect to both of them, or only separable with respect to the rank metric. The latter is already a well-known fact for networks with unrestricted adversaries. We conclude the paper with a future research question arising from that.

\begin{enumerate}\setcounter{enumi}{1}
    \item Provide a criterion for a network to be separable with respect to the Hamming metric. 
\end{enumerate}

\bibliographystyle{abbrv}
\bibliography{main}

\appendix

\section{A MINLP Model for Finding Network Codes}
\label{sec:MINLPmodel}

In this appendix, we explain how to find codes for the newly
introduced family of networks of Section~\ref{sec:newfamily} by means of
mixed-integer programming. To keep the notation simple, we focus in our presentation on simple 2-level networks (see Definition \ref{def:2-level}) with two intermediate nodes, where the adversarial power is equal to 1. However, we remark that all ideas naturally generalize to other networks. Moreover, at the end of this section, we sketch how to adapt our mixed-integer programming techniques to the SAT framework, which was used to derive some of the results presented in Proposition~\ref{prop:familyB}.

A mixed-integer nonlinear program (MINLP) is a mathematical optimization
problem of the form
\[
  \min\{ f(x) : g(x) \leq 0,\; x \in \Z^p \times \R^{n-p}\},
\]
where~$p$ and~$n$ are non-negative integers satisfying~$p \leq n$ and~$f\colon \R^n \to \R$,
$g\colon \R^n \to \R^m$.
The inequality~$g(x) \leq 0$ is interpreted componentwise.
If both~$f$ and~$g$ are affinely linear functions, the problem is called a
mixed-integer linear program (MILP).
In the past decennia, powerful algorithms have been developed to solve
MINLP and MILP problems, and the general nature of MINLPs and MILPs allows
to model and solve problems of different kinds such as graph coloring~\cite{CollEtAl2002}
or network design problems~\cite{GounarisEtAl2016}.
In a previous publication~\cite{hojny2023role}, we also discussed how to find robust network
codes by means of MINLP. We use the following notation.
\begin{notation}
For~$v \in \{V_1,V_2\}$, let~$D^{\text{in}}_v$ and~$D^{\text{out}}_v$ be
the edges entering and leaving~$v$, respectively.
\end{notation}

Suppose that the goal is to decide whether there exists an unambiguous outer
code of size~$M$ for an alphabet of size~$\ell$.
Let~$\mathcal{C} = \{c_1,\dots,c_M\}$ and~$\mathcal{A} = \{1,\dots,\ell\}$.
Both the outer code and the inner code will be modeled as variables in our MINLP model.
Let~$v \in \{V_1,V_2\}$.
To model the function at~$v$, we define for all tuples~$\sigma
\in \mathcal{A}^{D^{\text{in}}_v}$ and~$\tau \in
\mathcal{A}^{D^{\text{out}}_v}$ a binary variable~$f^v_{\sigma,\tau}$ which
shall have the following meaning:
\[
  f^v_{\sigma,\tau}
  =
  \begin{cases}
    1, & \text{if the function at~$v$ transforms input~$\sigma$ into output~$\tau$},\\
    0, & \text{otherwise}.    
  \end{cases}
\]
Moreover, for every edge~$a \in D^{\text{in}}_{V_1} \cup
D^{\text{out}}_{V_2}$, symbol~$s \in \mathcal{A}$, and codeword~$c \in
\mathcal{C}$, we introduce a binary variable~$x^c_{a,s}$ which shall have
the following meaning:
\[
  x^c_{a,s}
  =
  \begin{cases}
    1, & \text{if codeword~$c$ assigns edge~$a$ the symbol~$s$ in the absence of
         attacks},\\
    0, & \text{otherwise}.
  \end{cases}
\]

Next to these ``natural'' decision variables, we also introduce some
auxiliary variables. Let~$\Delta$ denote the set of admissible attacks where each element of $\Delta$ 
is the set of attacked edges.
For~$c \in \mathcal{C}$, $v \in \{V_1,V_2\}$, $\delta \in \Delta$, and~$\sigma \in
\mathcal{A}^{D^{\text{in}}_v}$, we introduce a binary
variable~$y^{c,v}_{\sigma,\delta}$ which shall have the following meaning:

\[
  y^{c,v}_{\sigma,\delta}
  =
  \begin{cases}
    1, & \begin{aligned}
         & \text{if the input to node~$v$ from codeword~$c$ on the edges not in $\delta$} \\
         & \text{agrees with~$\sigma$ when restricted to the unattacked edges.}
         \end{aligned} \\
    0, & \text{otherwise}.
  \end{cases}
\]

Finally, for~$c \in \mathcal{C}$, $v \in \{V_1,V_2\}$, $\delta \in \Delta$,
and~$\tau \in \mathcal{A}^{D^{\text{out}}_v}$, we introduce a binary variable~$z^{c,v}_{\tau,\delta}$ which shall have the following meaning: 

\[
  z^{c,v}_{\tau,\delta}
  =
  \begin{cases}
    1, & \text{if the
(manipulated) codeword~$c$ sends input~$\tau$ to the terminal~$T$ under
attack~$\delta$}, \\
    0, & \text{otherwise}.
  \end{cases}
\]

To provide all variables with the aforementioned meaning, we need to introduce
some constraints that link them.
The first family of constraints makes sure that every codeword~$c \in
\mathcal{C}$ assigns every edge~$a \in D^{\text{in}}_{V_1} \cup
D^{\text{in}}_{V_2}$ exactly one symbol.
Since all variables are binary, this can be achieved by the condition
\[
  \sum_{s \in \mathcal{A}} x^c_{a,s} = 1.
\]

The linking of~$x$-variables and~$y$-variables is achieved by the following
constraints.
For all codewords $c \in \mathcal{C}$,~$v \in \{V_1,V_2\}$, $\sigma \in
\mathcal{A}^{D^{\text{in}}_v}$, and~$\delta \in \Delta$, let~$M^v_\delta$
be the edges entering~$v$ that are \emph{not} attacked by~$\delta$
and define the constraints
\begin{align*}
  \sum_{a \in M^v_\delta} x^c_{a,\sigma_a}
  &\leq
    y^{c,v}_{\sigma,\delta} + \lvert M^v_\delta \rvert - 1 &&\\
  y^{c,v}_{\sigma,\delta}
  &\leq
    x^c_{a,\sigma_a}, && a \in M^v_\delta.
\end{align*}
In fact, the first inequality forces~$y^{c,v}_{\sigma,\delta}$ to take
value~1 if all unattacked edges~$a$ are assigned symbol~$\sigma_a$, whereas
the second family of constraints forces~$y^{c,v}_{\sigma,\delta}$ to~0 when
codeword~$c$ does not assign an unattacked edge~$a$ symbol~$\sigma_a$, i.e.,
$x^c_{a,\sigma_a} = 0$.

To make sure that the~$f$-variables indeed model functions, we enforce for
every~$v \in \{V_1,V_2\}$ that every input~$\sigma \in
\mathcal{A}^{D^{\text{in}}_v}$ has exactly one image~$\tau \in
\mathcal{A}^{D^{\text{out}}_v}$ via
\[
  \sum_{\tau \in \mathcal{A}^{D^{\text{out}}_v}} f^v_{\sigma,\tau} = 1.
\]

The values of the~$z^{c,v}$-variables, i.e., possible information received
at the terminal~$T$ from node~$v$, is fully determined by the possible inputs
to node~$v$ (the~$y^v$-variables) and the function modeled via~$f^v$.
This relation can be modeled as follows.

For~$c \in \mathcal{C}$, $v \in \{V_1,V_2\}$, $\sigma \in
\mathcal{A}^{D^\text{in}_v}$, $\tau \in \mathcal{A}^{D^{\text{out}}_v}$,
and~$\delta \in \Delta$, we enforce
\begin{align*}
  z^{c,v}_{\tau,\delta}
  &\leq
    \sum_{\sigma \in \mathcal{A}^{D^{\text{in}}_v}} f^v_{\sigma,\tau} \cdot y^{c,v}_{\sigma,\delta}, &&\\
  z^{c,v}_{\tau,\delta}
  &\geq
    f^v_{\sigma,\tau} \cdot y^{c,v}_{\sigma,\delta}, && \sigma \in \mathcal{A}^{D^{\text{in}}_v}.
\end{align*}
The first inequality models that if there is no pre-image~$\sigma$ under
attack~$\delta$ that is mapped via function~$f^v$ to image~$\tau$ (the
right-hand side of the inequality is~0 in this case), then~$z^{c,v}_{\tau,\delta}$
has to take value~0.
Otherwise, there exists such a pre-image~$\sigma$ and the second
constraints will force~$z^{c,v}_{\tau,\delta}$ to take value~1.

Finally, to ensure that there are no ambiguities, we need to guarantee
that two distinct codewords cannot send the same input to the terminal (due
to an attack).
This can be enforced by adding, for all distinct~$c_1,c_2 \in \mathcal{C}$,
for all attacks~$\delta_1,\delta_2 \in \Delta$, and all
strings~$\tau_1 \in \mathcal{A}^{D^{\text{out}}_1}$ and~$\tau_2 \in
\mathcal{A}^{D^{\text{out}}_2}$, the constraint
\[
  z^{c_1,V_1}_{\tau_1,\delta_1}\cdot z^{c_1,V_2}_{\tau_2,\delta_1}
  +
  z^{c_2,V_1}_{\tau_1,\delta_2}\cdot z^{c_2,V_2}_{\tau_2,\delta_2}
  \leq
  1.
\]
This indeed models nonambiguity, because a summand in this expression can
only take value~1 if under the corresponding attack, the strings~$\tau_1$ and~$\tau_2$
are the results of the functions employed at nodes~$V_1$ and~$V_2$, respectively.

Consequently, any binary assignment to the above variables that satisfies
all constraints encodes an unambiguous outer code.
If one can prove that no such binary assignment exists, then also the
non-existence of unambiguous codes has been shown.
The full MINLP model is given, in condensed form, by the following system.
\begin{align*}
  \min 0, &&&\\
  \sum_{s \in \mathcal{A}} x^c_{a,s} &= 1, &&
  c \in \mathcal{C}, a \in D^{\text{in}}_{V_1} \cup D^{\text{in}}_{V_2},\\
  \sum_{a \in M^v_\delta} x^c_{a,\sigma_a}
  &\leq
    y^{c,v}_{\sigma,\delta} + \lvert M^v_\delta \rvert - 1 &&
  c \in \mathcal{C}, v \in \{V_1,V_2\},
  \sigma \in \mathcal{A}^{D^{\text{in}}_v}, \delta \in \Delta,\\
  y^{c,v}_{\sigma,\delta}
  &\leq
    x^c_{a,\sigma_a}, &&
  c \in \mathcal{C}, v \in \{V_1,V_2\},
  \sigma \in \mathcal{A}^{D^{\text{in}}_v}, \delta \in \Delta,
  a \in M^v_\delta,\\
  \sum_{\tau \in \mathcal{A}^{D^{\text{out}}_v}} f^v_{\sigma,\tau} &= 1,&&
  v \in \{V_1,V_2\},
  \sigma \in \mathcal{A}^{D^{\text{in}}_v},\\
  z^{c,v}_{\tau,\delta}
  &\leq
    \sum_{\sigma \in \mathcal{A}^{D^{\text{in}}_v}} f^v_{\sigma,\tau} \cdot y^{c,v}_{\sigma,\delta}, &&
  c \in \mathcal{C}, v \in \{V_1,V_2\},
  \tau \in \mathcal{A}^{D^{\text{out}}_v},
  \delta \in \Delta,
  \\
  z^{c,v}_{\tau,\delta}
  &\geq
    f^v_{\sigma,\tau} \cdot y^{c,v}_{\sigma,\delta}, &&
  c \in \mathcal{C}, v \in \{V_1,V_2\},
  \sigma \in \mathcal{A}^{D^\text{in}_v}, \tau \in \mathcal{A}^{D^{\text{out}}_v},
  \delta \in \Delta,\\
  1
  &\geq
    z^{c_1,V_1}_{\tau_1,\delta_1}\cdot z^{c_1,V_2}_{\tau_2,\delta_1}
    +
    z^{c_2,V_1}_{\tau_1,\delta_2}\cdot z^{c_2,V_2}_{\tau_2,\delta_2}
    ,&&
  c_1,c_2 \in \mathcal{C}, c_1 \neq c_2,
  \delta_1,\delta_2 \in \Delta,\\
  &&&
  \tau_1 \in \mathcal{A}^{D^{\text{out}}_{V_1}}, \tau_2 \in \mathcal{A}^{D^{\text{out}}_{V_2}},\\
  x^c_{a,s} &\in \{0,1\},&&
  c \in \mathcal{C}, a \in D^{\text{in}}_{V_1} \cup D^{\text{in}}_{V_2},
  s \in \mathcal{A},\\
  f^v_{\sigma,\tau} & \in \{0,1\},&&
  v \in \{V_1,V_2\}, \sigma \in D^{\text{in}}_v, \tau\in D^{\text{out}}_v,\\
  y^{c,v}_{\sigma,\delta} &\in \{0,1\},&&
  c \in \mathcal{C},
  v \in \{V_1,V_2\}, \sigma \in D^{\text{in}}_v, \delta \in \Delta,\\
  z^{c,v}_{\tau,\delta} &\in \{0,1\}, &&
  c \in \mathcal{C},
  v \in \{V_1,V_2\}, \tau \in D^{\text{out}}_v, \delta \in \Delta.
\end{align*}

\newpage

\begin{remark}
  Although the above model contains nonlinear constraints due to the
  products of some variables, it can be linearized as follows.
  For two binary variables~$x$ and~$y$, their product can be expressed
  using an auxiliary variable~$\alpha$ and adding the conditions
  
  $$\alpha
  \leq x,\ \alpha \leq y,\ x + y \leq 1 + \alpha.$$
  To keep the above discussion to the point, we did not incorporate
  this reformulation into the above model.
  Moreover, such a reformulation is automatically carried out by modern
  MINLP technology.
\end{remark}

\begin{remark}
    We mentioned that a similar MINLP model has been used in~\cite{hojny2023role}.
    However, this model did not take adversarial attacks into account.
    To solve both the MINLP model from~\cite{hojny2023role} and the model presented in this section, the state-of-the-art software iteratively creates a list of smaller problems by fixing some variables to~0 or~1, respectively, until a feasible solution is found, or all created subproblems turned out to be infeasible.
    The latter means that there does not exist an unambiguous code.
    We also note that the running time of state-of-the-art software can be exponential in the number of variables.
\end{remark}

\begin{remark}
    The framework of SAT (satisfiability problems) has been used to derive some results presented in Propositon~\ref{prop:familyB}.
    The difference between the framework of MINLP and SAT is that all variables in SAT need to be Boolean, i.e., variables need to take the values ``true'' or ``false''.
    Moreover, the conditions between the variables need to be expressed by Boolean formulas, usually in a disjunctive normal form.
    That is, each condition combines a family of variables by or-operations.

    Since all variables used in our MINLP model are binary variables, there is an immediate transformation to Boolean variables, by interpreting the value~0 as ``false'' and the value~1 as ``true''.
    Furthermore, all constraints can be translated into logical formulas.
    For example, constraint~$y^{c,v}_{\sigma,\delta} \leq x^c_{a,\sigma_a}$ can be written as the logical condition~$x^c_{a,\sigma_a} \vee \neg y^{c,v}_{\sigma,\delta}$.
    For the sake of conciseness, however, we do not provide all constraints in terms of logical relations here.
\end{remark}

\section{Unambiguous Codes for Proving Lower Bounds}
\label{sec:concretecodes}

Proposition~\ref{prop:familyB} and Theorem~\ref{thm:another_interesting_ex} provide lower bounds on the capacity of certain networks.
In this appendix, we provide corresponding outer and inner codes that serve as proof to certify these lower bounds, and briefly describe how we found these codes.

\subsection{Lower Bound in Proposition~\ref{prop:familyB} part 1}

We provide an inner code $\mF= \{\mF_{V_1},\mF_{V_2}\}$ and an outer code $\mC$ of size 10 that is
unambiguous for the channel $\Omega[\mathfrak{B}_2,\mA,\mF,S \to T,\mU_S,1]$ where $\lvert \mA \rvert=4$. Represent $\mA$ as $\{0,1,2,3\}$. We let~$\mF_{V_1}(a)=a$ for all $a \in \mA$ and 
\begin{align*}
\mC &=\{(0,1,2,2),(3,0,1,0),(1,1,3,0),(3,3,2,1),(1,2,0,1),(1,3,1,2),(2,3,0,0),(0,0,0,3),\\&(2,0,3,1),(3,2,3,3)\}.
\end{align*}
Lastly, we let $\mF_{V_2}$:
\begin{itemize}
    \item $\mF_{V_2}^{-1}(0,0) = \{(3,2,1)\}$,
    \item $\mF_{V_2}^{-1}(0,1) = \{(0,0,3), (1,0,3), (2,2,0)\}$,
    \item $\mF_{V_2}^{-1}(0,2) = \{(1,3,0)\}$,
    \item $\mF_{V_2}^{-1}(0,3) = \{(0,0,0), (0,0,2), (0,1,2), (0,1,3), (3,0,2), (3,0,3), (3,1,0), (3,1,3), (3,3,2)\}$,
    \item $\mF_{V_2}^{-1}(1,0) = \{(0,3,1)\}$,
    \item $\mF_{V_2}^{-1}(1,1) = \{(2,3,3)\}$,
    \item $\mF_{V_2}^{-1}(1,2) = \{(1,1,3), (2,0,1)\}$,
    \item $\mF_{V_2}^{-1}(1,3) = \{(0,0,1), (0,2,3), (0,3,3), (1,0,1), (2,0,3), (2,1,3), (2,2,3), (2,3,1)\}$,
    \item $\mF_{V_2}^{-1}(2,0) = \{(1,0,0), (1,0,2), (1,2,0), (1,2,3), (1,3,3), (2,3,0), (2,3,2), (3,3,0), (3,3,3)\}$,
    \item $\mF_{V_2}^{-1}(2,1) = \{(0,2,1), (1,1,2), (2,1,2), (3,1,1), (3,2,2), (3,3,1)\}$,
    \item $\mF_{V_2}^{-1}(2,2) = \{(0,1,0)\}$,
    \item $\mF_{V_2}^{-1}(2,3) = \{(0,2,2), (1,2,2)\}$,
    \item $\mF_{V_2}^{-1}(3,0) = \{(1,2,1), (2,0,0), (2,0,2), (2,1,1), (2,2,1), (2,2,2), (3,0,1), (3,2,0), (3,2,3)\}$,
    \item $\mF_{V_2}^{-1}(3,1) = \{(3,0,0)\}$,
    \item $\mF_{V_2}^{-1}(3,2) = \{(3,1,2)\}$,
    \item $\mF_{V_2}^{-1}(3,3) = \{(0,1,1), (0,2,0), (0,3,0), (0,3,2), (1,1,0), (1,1,1), (1,3,1), (1,3,2), (2,1,0)\}$.
\end{itemize}
It can be verified that the code $\mC$ is unambiguous for the channel
$\Omega[\mathfrak{B}_2,\mA,\mF,S \to T,\mU_S,1]$.

\subsection{Lower Bound in Proposition~\ref{prop:familyB} part 2}

We provide an inner code~$\mF= \{\mF_{V_1},\mF_{V_2}\}$ and an outer code $\mC$ of size 16 that is
unambiguous for the channel $\Omega[\mathfrak{B}_{2},\mA,\mF,S \to T,\mU_S,1]$. Represent $\mA$ as $\{0,1,2,3,4\}$. Let $\mF_{V_1}(a)=a$ for all~$a \in \mA$ and 
\begin{align*}
\mC =\{&(0,0,0,0), (1,4,4,4), (2,0,1,3), (3,4,1,0), (4,2,0,4), (1,2,1,1), (2,4,0,1), (3,3,4,1),\\ & (4,0,2,1), (1,0,3,2), (2,2,4,2), (3,1,2,2), (4,1,4,0), (0,4,2,3), (0,3,3,4), (3,2,3,3)\}.
\end{align*}
Lastly, we let $\mF_{V_2}$:
\begin{itemize}
    \item $\mF_{V_2}^{-1}(0,0) = \{(0,0,0), (3,0,0), (4,0,0), (4,4,0), (2,0,1), (0,1,1), (4,1,1), (2,4,1),$\\
    $\phantom{\mF_{V_2}^{-1}(0,0) = \{}(4,4,1),(0,2,2),(2,3,2),(2,1,3),(0,2,3),(4,0,4),(2,3,4),(2,4,4),$\\
    $\phantom{\mF_{V_2}^{-1}(0,0) = \{}(2,2,0),(3,2,0),(1,3,1),(3,0,2),(3,1,2),(1,0,3),(3,0,3),(1,1,4)\}$,
    \item $\mF_{V_2}^{-1}(0,1) = \{(0,4,4),(4,4,4)\}$,
    \item $\mF_{V_2}^{-1}(0,2) = \{(0,1,3),(1,1,3),(3,1,3),(0,4,3),(0,1,4)\}$,
    \item $\mF_{V_2}^{-1}(0,3) = \{(3,1,0),(4,1,0),(4,3,0),(4,1,2)\}$,
    \item $\mF_{V_2}^{-1}(0,4) = \{(1,0,4),(2,0,4),(2,2,4)\}$,
    \item $\mF_{V_2}^{-1}(1,0) = \{(1,1,1),(2,1,1)\}$,
    \item $\mF_{V_2}^{-1}(1,1) = \{(1,0,1),(4,0,1),(4,3,1),(4,0,2)\}$,
    \item $\mF_{V_2}^{-1}(1,2) = \{(3,4,1),(3,4,3)\}$,
    \item $\mF_{V_2}^{-1}(1,3) = \{(0,2,1),(0,2,4)\}$,
    \item $\mF_{V_2}^{-1}(1,4) = \{(0,3,2),(4,3,2)\}$,
    \item $\mF_{V_2}^{-1}(2,0) = \{(2,4,2)\}$,
    \item $\mF_{V_2}^{-1}(2,1) = \{(1,0,2),(1,1,2),(1,2,2),(3,2,2),(1,2,4)\}$,
    \item $\mF_{V_2}^{-1}(2,2) = \{(1,3,0),(1,4,0),(1,4,3)\}$,
    \item $\mF_{V_2}^{-1}(2,3) = \{(3,2,3),(4,2,3)\}$,
    \item $\mF_{V_2}^{-1}(2,4) = \{(3,3,0),(3,1,4),(3,2,4),(1,3,4),(3,3,4)\}$,
    \item $\mF_{V_2}^{-1}(3,0) = \{(2,3,0),(1,3,3),(2,3,3)\}$,
    \item $\mF_{V_2}^{-1}(3,1) = \{(0,1,0),(2,0,0),(0,0,4),(0,0,3),(4,1,4)\}$,
    \item $\mF_{V_2}^{-1}(3,2) = \{(0,0,1),(0,2,0),(3,0,1),(3,1,1),(2,2,1),(3,2,1),(0,4,1)\}$,
    \item $\mF_{V_2}^{-1}(3,3) = \{(1,4,2),(1,0,0),(0,4,0),(1,2,0),(0,3,0),(0,0,2),(2,2,2),(1,3,2),(0,4,2)\}$,
    \item $\mF_{V_2}^{-1}(3,4) = \{(4,2,1),(1,2,1),(4,2,2),(1,2,3),(4,0,3),(4,4,3),(4,2,4)\}$,
    \item $\mF_{V_2}^{-1}(4,0) = \{(3,4,4),(2,4,0),(3,4,0),(1,4,1),(3,3,1),(3,4,2),(4,4,2),(4,3,4),(1,4,4)\}$,
    \item $\mF_{V_2}^{-1}(4,1) = \{(4,1,3),(1,1,0),(2,1,0),(4,2,0)\}$,
    \item $\mF_{V_2}^{-1}(4,2) = \{(0,3,3),(0,3,1),(0,1,2),(3,3,3)\}$,
    \item $\mF_{V_2}^{-1}(4,3) = \{(3,3,2),(0,3,4),(3,0,4)\}$,
    \item $\mF_{V_2}^{-1}(4,4) = \{(2,3,1),(2,0,2),(2,1,2),(2,0,3),(2,2,3),(4,3,3),(2,4,3),(2,1,4)\}$.
\end{itemize}

\subsection{Lower and Upper Bound in Theorem~\ref{thm:another_interesting_ex} part 1}

The following code has been found by solving the MINLP model of Section~\ref{sec:MINLPmodel} using the software Gurobi version~11.0.3. 
We modeled and solved the MINLP using Gurobi's Python interface.

We provide an inner code~$\mF= \{\mF_{V_1},\mF_{V_2}\}$ and an outer code $\mC$ of size 6 that is
unambiguous for the channel $\Omega[\mathfrak{S}_{3,1,2},\F_2,\mF,S \to T,\mU_S,1]$. Represent $\mA$ as $\{0,1\}$. Let $\mF_{V_1}(a)=a$ for all~$a \in \mA$ and 
\begin{align*}
\mC &=\{(0,0,0,1,1,0),(0,1,1,1,0,1),(0,1,0,0,0,0),(1,0,0,1,0,1),(1,0,1,0,0,0),(1,1,1,1,1,0)\}.
\end{align*}
Lastly, we let $\mF_{V_2}$:
\begin{itemize}
    \item $\mF_{V_2}^{-1}(0,0) = \{(0,0,0), (0,1,1)\}$,
    \item $\mF_{V_2}^{-1}(0,1) = \{(0,0,1), (0,1,0), (1,0,0), (1,1,1)\}$,
    \item $\mF_{V_2}^{-1}(1,0) = \{(1,0,1)\}$,
    \item $\mF_{V_2}^{-1}(1,1) = \{(1,1,0)\}$.
    \end{itemize}

Moreover, to show that no outer code, for the specified parameters, of size~7 exists, we asked Gurobi to solve the MINLP for a code of size~7.
After roughly~7.6 hours of computations, Gurobi concluded that no code of size~7 exists.

\subsection{Lower Bound in Theorem~\ref{thm:another_interesting_ex} part 2}

The following code has been found by solving the MINLP model of Section~\ref{sec:MINLPmodel} using the software Gurobi version~11.0.3. 
We modeled and solved the MINLP using Gurobi's Python interface.
Moreover, we added another constraint to the MINLP model that enforces that the all-0 string is used as codeword, which we can assume without loss of generality.

We provide an inner code~$\mF= \{\mF_{V_1},\mF_{V_2}\}$ and an outer code $\mC$ of size 15 that is
unambiguous for the channel $\Omega[\mathfrak{S}_{3,1,2},\F_3,\mF,S \to T,\mU_S,1]$. Represent $\mA$ as $\{0,1,2\}$. Let $\mF_{V_1}(a)=a$ for all~$a \in \mA$ and 
\begin{align*}
\mC =\{&
(0,0,0,0,0,0),
(2,2,1,0,1,0),
(1,0,0,1,1,0),
(1,1,2,2,1,1),
(2,1,0,2,0,1),\\
&(2,1,2,0,2,2),
(0,2,2,1,1,1),
(1,2,0,1,0,2),
(0,1,0,2,1,2),
(1,1,1,0,0,2),\\
&(0,0,1,2,2,1),
(1,2,2,2,2,2),
(0,1,2,1,0,2),
(2,0,2,1,2,0),
(2,0,1,2,0,0)
\}.
\end{align*}

\newpage

Lastly, we let $\mF_{V_2}$:
\begin{itemize}
    \item $\mF_{V_2}^{-1}(0,0) = \{
    (1,0,2), (2,0,0)\}$,
    \item $\mF_{V_2}^{-1}(0,1) = \{
    (0,0,0), (0,2,2\}$,
    \item $\mF_{V_2}^{-1}(0,2) = \{
    (1,2,0),(2,1,2)\}$,
    \item $\mF_{V_2}^{-1}(1,0) = \{
    (1,2,2),(2,2,1)\}$,
    \item $\mF_{V_2}^{-1}(1,1) = \{
    (0,1,0),(0,1,1),(2,1,1)\}$,
    \item $\mF_{V_2}^{-1}(1,2) = \{
    (0,0,1),(0,2,1),(2,2,0),(2,2,2)\}$,
    \item $\mF_{V_2}^{-1}(2,0) = \{
    (1,1,1),(2,0,1)\}$,
    \item $\mF_{V_2}^{-1}(2,1) = \{
    (0,1,2),(1,1,0),(1,2,1),(2,1,0)\}$,
    \item $\mF_{V_2}^{-1}(2,2) = \{
    (0,0,2),(0,2,0),(1,0,0),(1,0,1),(1,1,2),(2,0,2)\}$.
    \end{itemize}
\end{document}